\newtheorem{thm}{Theorem}[section]
\newtheorem{lem}[thm]{Lemma}
\newtheorem{definition}{Definition}
\newtheorem{remark}{Remark}
\begin{document}
\title{Stable Throughput Region of the Two-User Broadcast Channel}
\author{\IEEEauthorblockN{Nikolaos Pappas~\IEEEmembership{Member,~IEEE}, Marios Kountouris~\IEEEmembership{Senior~Member,~IEEE}, Anthony Ephremides~\IEEEmembership{Life~Fellow,~IEEE}, Vangelis Angelakis~\IEEEmembership{Member,~IEEE}}
		
	\thanks{N. Pappas and V. Angelakis are with the Department of Science	and Technology, Link{\"o}ping University, SE-60174 Norrk{\"o}ping, Sweden. Email: \{nikolaos.pappas, vangelis.angelakis\}@liu.se.} 
	\thanks{M. Kountouris is with the Mathematical and Algorithmic Sciences Lab, France Research Center, Huawei Technologies Co. Ltd.
Boulogne-Billancourt, 92100, France. Email: marios.kountouris@huawei.com.}
	\thanks{A. Ephremides is with the Department of Electrical and Computer Engineering and Institute for Systems Research, University of Maryland, College Park, MD 20742. E-mail: etony@umd.edu.}
	\thanks{This work has been partially supported by the People Programme (Marie Curie Actions) of the European Union's Seventh Framework Programme FP7/2007-2013/ under REA grant agreement no.[612361] -- SOrBet.}
	\thanks{This work was presented in part in the IEEE International Conference on Communications (ICC) 2016 \cite{b:pappas-iccBC-2016}.}
}

\maketitle

\begin{abstract}
In this paper we consider the two-user broadcast channel and we characterize its stable throughout region. We start the analysis by providing the stability region for the general case without any specific considerations on transmission and reception mechanisms. We also provide conditions for the stable throughput region to be convex. Subsequently, we consider the case where the transmitter uses superposition coding and we consider two special cases for the receivers. The first one is when both receivers treat interference as noise. The second is when the user with a better channel uses successive decoding and the other receiver treats interference as noise.
\end{abstract}

\newpage

\section{Introduction}

In wireless communications simultaneous transmission to multiple receivers is desired as for example in the case of base stations in cellular systems. The broadcast channel, which was first introduced in \cite{b:CoverBC72}, models the simultaneous communication of information (different messages) from one source to multiple destinations. One approach to establish the communication is to set up orthogonal channels in terms of time/frequency etc. in order to serve each user separately. Although this approach eliminates the interference among concurrent transmissions, it is not in general optimal in terms of achievable rates \cite{b:Tse}. 

Alternative methods include canceling the interference in a variety of ways. For example superposition coding (SC) is a way to remove the orthogonality constraint in a transmission by a base station to a number of receivers and it is known that it may achieve the capacity in the case of a Gaussian broadcast channel \cite{b:CoverBroadcast75} and \cite{b:Cover}.

In this paper we consider an important but often overlooked aspect of operation and performance of the broadcast channel, namely, the stable throughput region
which is related to capacity but concerns the case of random transmission demand patterns \cite{b:EphremidesHajekUnion}.

The exact characterization of the stability region of networks with bursty sources (in contrast to saturated sources, for which the notion of stability does not exist) is known to be a difficult problem due to the interaction among the queues (e.g. each node/queue transmits and, thus, interferes with the others when its queue is non-empty). Thus, we limit our study in this paper to the case of two users so as to capture the essence of the effect of such interaction.
We obtain the stability region for the general case of a broadcast channel without any specific considerations on the details of the transmission and reception structures. We provide conditions for the convexity of the region; this is useful since it implies that simultaneous transmission can outperform time-sharing. We also consider the maximum aggregate stable throughput (known also as sum-throughput) for the general case. Subsequently, we consider the case where the transmitter applies superposition coding and we consider two special cases for the receivers. The first one is when both receivers treat interference as noise. The second is when the user with a better channel uses successive decoding and the other receiver treats interference as noise. For both cases we derive the success probabilities that are needed in order to apply the results obtained for the general case. Two simple transmit power allocation schemes are also considered: i) the assigned power level to each user remains fixed, and ii) the transmit power is adapted to the state of the queue. All the analytical results are also evaluated numerically. 

In Section \ref{sec:model} we describe the system model, in Section \ref{sec:stability_general} we calculate the stability region and obtain the convexity conditions. In Section \ref{sec:Aggregate} the maximum aggregate stable throughput is obtained for the general case. In Sections \ref{sec:stability_IAN} and \ref{sec:stability_SC} we consider the case where the transmitter uses superposition coding and either both receivers treat interference as noise or the stronger receiver applies successive decoding and the other treats interference as noise. In Section \ref{sec:NumResults} we provide numerical evaluation of the analytical results presented in the aforementioned sections. 

The broadcast channel was introduced in \cite{b:CoverBC72}. The work in \cite{b:CoverBC72}, \cite{b:CoverBroadcast75}, \cite{b:Marton}, \cite{b:Jafarian} provide a characterization of the information-theoretic capacity region of the broadcast channel. Caire and Shamai in \cite{b:CaireShamaiToIT2003} investigated the achievable throughput of a multi-antenna Gaussian broadcast channel. Fayolle et al. \cite{b:FayolleBroadcast} provided a theoretical treatment of some basic problems related to the stability investigation of the broadcast channel. 
In \cite{b:ZhouWunderAllerton2008}, scheduling policies in a broadcast system were considered and general conditions covering a class of throughput optimal scheduling policies were obtained. In \cite{b:JafarBC_stability}, the authors characterized the stability regions of two-user Gaussian fading multiple access and broadcast channels with centralized scheduling under the assumption of backlogged users. In \cite{b:EphremidesBroadcast13}, the capacity region of the two-user broadcast erasure channel was characterized and algorithms based on linear network coding and its stability region was also obtained.

\section{System Model} \label{sec:model}
We consider a two-user broadcast channel, as shown in Fig.\ref{fig:system_model}, in which one transmitter $S$ has two different queues with packets intended for two different receivers. The $i$-th queue ($i=1,2$) contains the packets (messages) that are destined to receiver $D_i$. Time is assumed to be slotted, the packet arrival processes at the first and the second queue are assumed to be independent and stationary with mean rates $\lambda_1$ and $\lambda_2$ in packets per slot, respectively. Both queues have infinite capacity to store incoming packets and $Q_i$ denotes the size, measured in number of packets. The source transmits packets in a time slot if at least one of its queue is not empty. The transmission of one packet requires one time slot and we assume that acknowledgements (ACKs) are received instantaneously and error-free (this is a simplifying, standard assumption in studies of this kind).

\begin{figure}[]
\centering
\includegraphics[scale=1.2]{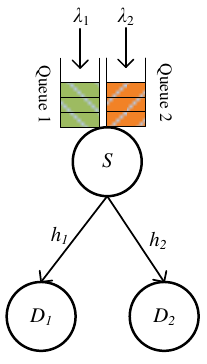}
\caption{The two-user broadcast channel with bursty arrivals.}
\centering
\label{fig:system_model}
\end{figure}

If only one queue at the source is non-empty during a time slot, then the transmitter sends information to the corresponding receiver only. When both queues have packets, the source transmits a packet that contains both messages, whereas whenever both queues at the source are empty, the transmitter remains silent.

Let $\mathcal{D}_{i/\mathcal{T}}$ denote the event that $D_i$ is able to decode successfully the packet transmitted from the $i$-th queue of the transmitter given a set of non-empty queues denoted by $\mathcal{T}$, e.g. $\mathcal{D}_{1/1,2}$ denotes the event that receiver $D_1$ can decode the packet from the first queue when both queues are not empty ($\mathcal{T} = \{1,2\}$). It is reasonably assumed that no matter what the detection mechanism is we have $\mathrm{Pr}\left(\mathcal{D}_{1/1,2} \right) \leq \mathrm{Pr}\left(\mathcal{D}_{1/1} \right)$.

The average packet service rate for the first queue is 
\begin{eqnarray} \label{eq:mu1}
\mu_1 = \mathrm{Pr}\left(Q_2 > 0 \right) \mathrm{Pr}\left(\mathcal{D}_{1/1,2} \right) + \mathrm{Pr}\left(Q_2 = 0 \right) \mathrm{Pr}\left(\mathcal{D}_{1/1} \right).
\end{eqnarray}
Respectively, the average service rate for the second queue is
\begin{eqnarray} \label{eq:mu2}
\mu_2 = \mathrm{Pr}\left(Q_1 > 0 \right) \mathrm{Pr}\left(\mathcal{D}_{2/1,2} \right) + \mathrm{Pr}\left(Q_1 = 0 \right) \mathrm{Pr}\left(\mathcal{D}_{2/2} \right).
\end{eqnarray}
If a packet from the $i$-th queue fails to reach $D_i$, it remains in the queue and is retransmitted in the next time slot. 

If only the $i$-th queue is non-empty the received signal $y_i$ of user $D_i$ is given by 
\begin{eqnarray} \label{eq:signal}
y_i = h_i x_i + n_i,
\end{eqnarray}
where $n_i$ is the additive white Gaussian noise with zero mean and unit variance. The channel gain from the transmitter to $D_i$ is denoted by $h_i$, and the transmitted signal is $x_i$. A block fading model with Rayleigh fading is considered for the channel, i.e. the fading coefficients $h_i$ remain constant during one time slot, but change independently from one time slot to another. The transmitter allocates power $P_i$ for messages (packets) from the $i$-th queue. 

The event $\mathcal{D}_{i/i}$ is defined as the probability that the received signal-to-noise ratio (SNR) of the $i$-th receiver is above a certain threshold $\gamma_i$, i.e. $\mathcal{D}_{i/i} = \left\lbrace \gamma_i \leq \mathrm{SNR}_i \right\rbrace$. The distance between the transmitter and $D_i$ is denoted by $d_i$. Under the physical model, $\mathrm{SNR}_i \triangleq |h_{i}|^2 d^{-\alpha}_{i} P_i$, where a distance-dependent, power-law path loss function $d_i^{\alpha}$ is considered, with $\alpha$ being the path loss exponent. A transmission is successful if and only if SNR at the intended receiver exceeds a threshold so that the transmitted signal can be decoded with an acceptable bit error probability. The probability that the link between the transmitter and $D_i$ is not in outage when only the $i$-th queue is non-empty is given by (Ch. 5.4 in~\cite{b:Tse})
\begin{eqnarray} \label{eq:SNR}
\mathrm{Pr}\left(\mathcal{D}_{i/i}\right) =\mathrm{Pr} \left\lbrace \mathrm{SNR}_i \geq \gamma_i \right\rbrace = \exp \left(- \frac{\gamma_i d^{\alpha}_{i}}{P_{i}}\right).
\end{eqnarray}

The main result of this paper is oblivious to the details of how successful reception is achieved. It is just based on the values of the success probabilities which may depend on several factors like power, rate, distance, coding and decoding algorithms. Those are considered in the subsequent sections.

When both queues at the source are non-empty, the source transmits the superimposed signal $x=x_1 +x_2$, where $x_i$ is the signal intended for $D_i, i=1,2$. Then, the received signal $y_i$ at $D_i$ is given by $y_i = h_i x + n_i$, and the total transmit power is $P=P_1+P_2$. 
We refer to the two packets used in a single superposition-based transmission as two levels. The packet intended for the weak receiver (i.e. $D_2$) is referred to as the first level. We refer to the other level as the second level. A transmitter using superposition coding splits the available transmission power between the two levels, selects the transmission rate for each of the levels, then encodes and modulates each of the packets separately at the selected rate. 
The modulated symbols are scaled appropriately to match the chosen power split and constraint, and summed to obtain the transmitted signal. More details about implementation of superposition coding at the medium access layer can be found in \cite{b:SC_Scheduling_Haenggi} and \cite{b:SCMobicom2007}. 
We assume that the transmitter and both receivers $D_i$ know perfectly each channel realization $h_i$ (perfect CSI).

\subsection{Stability Criterion}

We use the following definition of queue stability~\cite{Szpankowski:stability}:

\begin{definition}
Denote by $Q_i^t$ the length of queue $i$ at the beginning of time slot $t$. The queue is said to be \emph{stable} if
$\lim_{t \rightarrow \infty} {Pr}[Q_i^t < {x}] = F(x)$ and $\lim_{ {x} \rightarrow \infty} F(x) = 1$.
\end{definition}

Although we do not make explicit use of this definition we use its corollary consequence which is Loynes' theorem~\cite{b:Loynes} that states that if the arrival and service processes of a queue are strictly jointly stationary and the average arrival rate is less than the average service rate, then the queue is stable. If the average arrival rate is greater than the average service rate, then the queue is unstable and the value of $Q_i^t$ approaches infinity almost surely. The stability region of the system is defined as the set of arrival rate vectors $\boldsymbol{\lambda}=(\lambda_1, \lambda_2)$ for which the queues in the system are stable.

\section{The Stability Region -- The General Case} \label{sec:stability_general}

The average service rates of the first and second queue are given by (\ref{eq:mu1}) and (\ref{eq:mu2}), respectively.
Since the average service rate of each queue depends on the queue size of the other queue, the stability region cannot be computed directly. For that, we apply the stochastic dominance technique~\cite{rao:stability}, i.e. we construct hypothetical dominant systems, in which the source transmits dummy packets for the packet queue that is empty, while for the non-empty queue it transmits according to its traffic.

\subsection{First Dominant System: the first queue transmits dummy packets}
In the first dominant system, when the first queue is empty, the source transmits a dummy packet for $D_1$, while the second queue behaves as in the original system. All other operational aspects remain unaltered in the dominant system. Thus, in this dominant system, the first queue never empties, hence the service rate for the second queue is constant and given by $\mu_2 = \mathrm{Pr}\left(\mathcal{D}_{2/1,2} \right)$.

Then, we can obtain stability conditions for the second queue by applying Loynes' criterion~\cite{b:Loynes}. The queue at the second source is stable if and only if $\lambda_2 < \mu_2$, that is $\lambda_2 < \mathrm{Pr}\left(\mathcal{D}_{2/1,2}\right)$.
Then we can obtain the probability that the second queue is empty by applying Little's theorem, i.e.
\begin{eqnarray} \label{eq:Pr2empty_D1}
\mathrm{Pr}\left(Q_2 = 0 \right)  = 1-\frac{\lambda_2}{\mathrm{Pr}\left(\mathcal{D}_{2/1,2}\right)}.
\end{eqnarray}

Inserting (\ref{eq:Pr2empty_D1}) into (\ref{eq:mu1}), we obtain that the service rate for the first queue in the first dominant system is given by
\begin{eqnarray} \label{eq:mu1_D1}
\mu_1 = \mathrm{Pr}\left(\mathcal{D}_{1/1}\right) - \frac{\mathrm{Pr}\left(\mathcal{D}_{1/1}\right) - \mathrm{Pr}\left(\mathcal{D}_{1/1,2}\right)}{\mathrm{Pr}\left(\mathcal{D}_{2/1,2}\right)}\lambda_2.
\end{eqnarray}

The first queue is stable if and only if $\lambda_1 < \mu_1$. The stability region $\mathcal{R}_1$ obtained from the first dominant system
is given by

\begin{align} \label{eq:R_1}
\mathcal{R}_1 = \left\lbrace (\lambda_{1},\lambda_{2}): \frac{\lambda_1}{\mathrm{Pr}\left(\mathcal{D}_{1/1} \right)} + \frac{\mathrm{Pr}\left(\mathcal{D}_{1/1} \right) - \mathrm{Pr}\left(\mathcal{D}_{1/1,2} \right)}{\mathrm{Pr}\left(\mathcal{D}_{1/1} \right)\mathrm{Pr}\left(\mathcal{D}_{2/1,2} \right)}\lambda_2 < 1  , \lambda_2 < \mathrm{Pr} \left(\mathcal{D}_{2/1,2} \right)  \right\rbrace.
\end{align}
 
\subsection{Second Dominant System: the second queue transmits dummy packets}

In the second dominant system, when the second queue empties, the source transmits a dummy packet for $D_2$, while the first queue behaves in the same way as in the original system. In this dominant system, the second queue is never empty, so the service rate for the first queue is given by
\begin{eqnarray} \label{eq:mu1_D2}
\mu_1 = \mathrm{Pr}\left(\mathcal{D}_{1/1,2}\right).
\end{eqnarray}

The first queue is stable if and only if $\lambda_1 < \mu_1$. The probability that $Q_1$ is empty is given by
\begin{eqnarray} \label{eq:Pr1empty_D2}
\mathrm{Pr}\left(Q_1 = 0 \right)  = 1-\frac{\lambda_1}{\mathrm{Pr}\left(\mathcal{D}_{1/1,2}\right)}.
\end{eqnarray}

The service rate of the second queue, after substituting (\ref{eq:Pr1empty_D2}) into (\ref{eq:mu2}) is given by
\begin{equation} \label{eq:mu2_D2}
\mu_2 = \mathrm{Pr}\left(\mathcal{D}_{2/2}\right) - \frac{\mathrm{Pr}\left(\mathcal{D}_{2/2}\right) - \mathrm{Pr}\left(\mathcal{D}_{2/1,2}\right)}{\mathrm{Pr}\left(\mathcal{D}_{1/1,2}\right)}\lambda_1.
\end{equation}

The stability region $\mathcal{R}_2$ obtained from the second dominant system is given by
\begin{align} \label{eq:R_2}
\mathcal{R}_2 = \left\lbrace (\lambda_{1},\lambda_{2}): \frac{\lambda_2}{\mathrm{Pr}\left(\mathcal{D}_{2/2} \right)} + \frac{\mathrm{Pr}\left(\mathcal{D}_{2/2} \right) - \mathrm{Pr}\left(\mathcal{D}_{2/1,2} \right)}{\mathrm{Pr}\left(\mathcal{D}_{2/2} \right)\mathrm{Pr}\left(\mathcal{D}_{1/1,2} \right)}\lambda_1 < 1  , \lambda_1 < \mathrm{Pr} \left(\mathcal{D}_{1/1,2} \right)  \right\rbrace.
\end{align}

The stability region of the system is given by $\mathcal{R} = \mathcal{R}_1 \bigcup \mathcal{R}_2$, where 
$\mathcal{R}_1$ and $\mathcal{R}_2$ are given by (\ref{eq:R_1}) and (\ref{eq:R_2}) respectively and is depicted in Fig. \ref{fig:region_general}.

\begin{figure}[]
\centering
\includegraphics[scale=0.6]{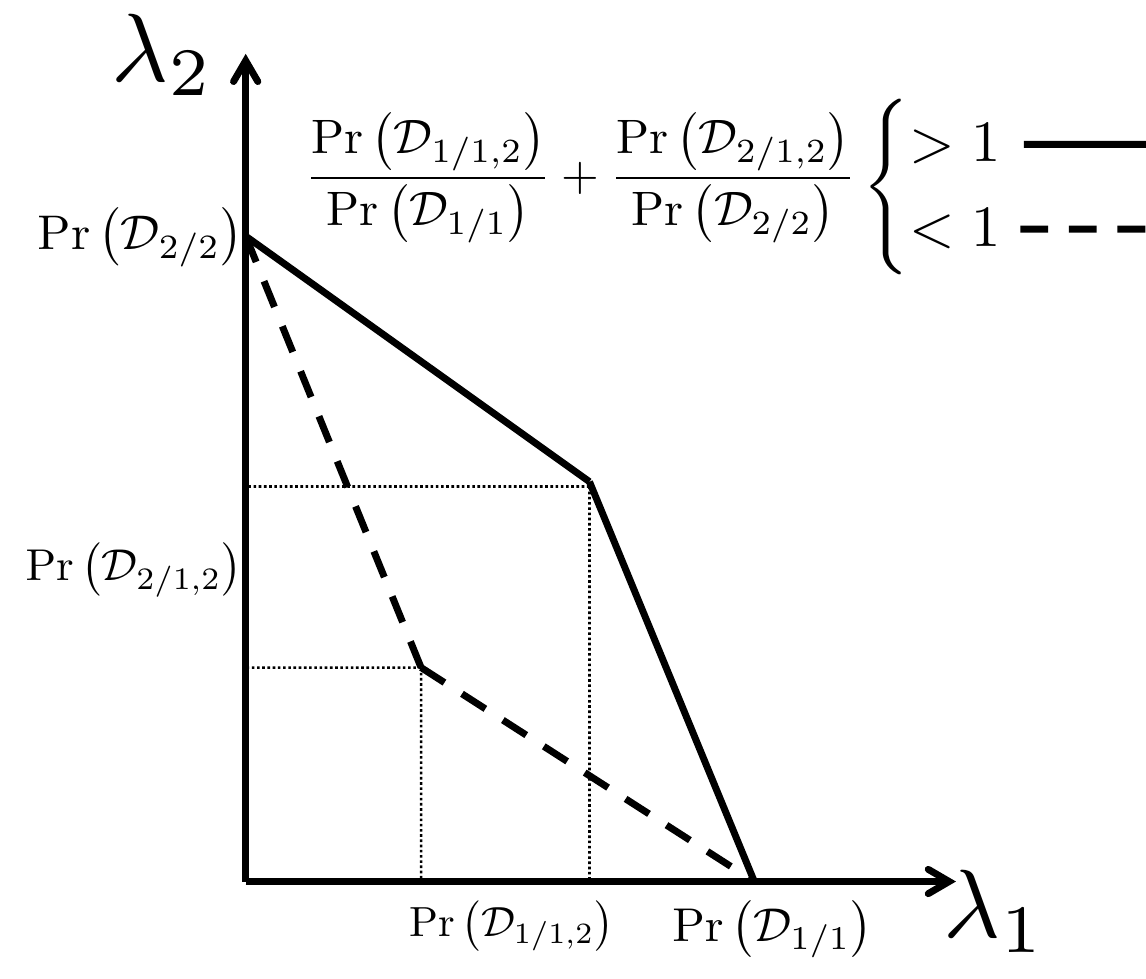}
\caption{The stability region for the two-user broadcast channel in the general case.}
\centering
\label{fig:region_general}
\end{figure}

An important observation made in \cite{rao:stability} is that the stability conditions obtained by the stochastic dominance technique are not only sufficient but also necessary for the stability of the original system. The \emph{indistinguishability} argument~\cite{rao:stability} applies to our problem as well. Based on the construction of the dominant system, it is easy to see that the queue sizes in the dominant system are always greater than those in the original system, provided they are both initialized to the same value and the arrivals are identical in both systems. Therefore, given $\lambda_{2}<\mu_{2}$, if for some $\lambda_{1}$, the queue at $S_1$ is stable in the dominant system, then the corresponding queue in the original system must be stable. Conversely, if for some $\lambda_{1}$ in the dominant system, the queue at node $S_1$ saturates, then it will not transmit dummy packets, and as long as $S_1$ has a packet to transmit, the behavior of the dominant system is identical to that of the original system since dummy packet transmissions are eliminated as we approach the stability boundary. Therefore, the original and the dominant system are indistinguishable at the boundary points. This line of argument has been used in almost all papers on the subject of interacting queues.

Interestingly, the obtained stability region for the two-user broadcast channel in the general case has the same expression as the stability region of the two-user interference channel obtained in \cite{b:pappasitw2013-ic}. Note that the stability region was obtained without any specific assumption on the physical layer processing and the receiver. 

The stability region $\mathcal{R}$ is a 2-dimensional {\emph{convex polyhedron}} when the following condition holds:
\begin{eqnarray}\label{eq:convexity_general}
\frac{\mathrm{Pr}\left(\mathcal{D}_{1/1,2}\right)}{\mathrm{Pr}\left(\mathcal{D}_{1/1}\right)}+
\frac{\mathrm{Pr}\left(\mathcal{D}_{2/1,2}\right)}{\mathrm{Pr}\left(\mathcal{D}_{2/2}\right)} \geq 1.
\end{eqnarray}

When equality holds in (\ref{eq:convexity_general}), the region is a triangle and coincides with the case of time-sharing. Convexity is an important property since it corresponds to the case when parallel concurrent transmissions are preferable to a time-sharing scheme. Additionally, convexity of the stability region implies that if two rate pairs are stable, then any rate pair lying on the line segment joining those two rate pairs is also stable. Convexity of the region also results in higher aggregate stable throughput as discussed in Section \ref{sec:Aggregate}.

\subsection{Closure of the Stability Region} \label{sec:closure}
In the previous subsection, we obtained the stability region in terms of success probabilities under the assumption of fixed powers.
If we take the union of these regions over all possible power allocations to the users we obtain the total stability region (i.e. the envelope of the individual regions). This corresponds to the closure of the stability region and is defined as

\begin{equation} \label{eq:closure_def}
\mathcal{L}\triangleq \left( \bigcup_{ (P_1, P_2) \in [0,P]^2, \\ P_1+P_2=P} \mathcal{L}_1 (P_1,P_2) \right) \bigcup \left( \bigcup_{ (P_1, P_2) \in [0,P]^2, \\ P_1+P_2=P} \mathcal{L}_2 (P_1,P_2) \right)
\end{equation}
where $\mathcal{L}_i (P_1,P_2) \triangleq \mathcal{R}_i$ for $i=1,2$ are obtained in the previous subsections.

\section{Maximum Aggregate Stable Throughput} \label{sec:Aggregate}
In addition to the stability region, another important performance metric is the maximum aggregate stable throughput $T_\mathrm{aggr}$, i.e. the sum of the arrivals rates such that both queues are stable as stated by
\begin{eqnarray*}
\begin{aligned}
& {\text{maximize}}
& & T_\mathrm{aggr} = \lambda_1+\lambda_2 \\
& \text{subject to}
& & (\lambda_{1},\lambda_{2}) \in \mathcal{R}.
\end{aligned}
\end{eqnarray*}

The above is a trivial maximization problem that corresponds to a simple linear program, hence the optimal solution lies at an extreme point (corner point of the stability region) \cite{b:Bertsimas}. In Section \ref{sec:stability_general} the stability region was obtained, whose corner points are $\left(0,\mathrm{Pr}\left(\mathcal{D}_{2/2}\right) \right)$, $\left( \mathrm{Pr}\left(\mathcal{D}_{1/1,2}\right),\mathrm{Pr}\left(\mathcal{D}_{2/1,2}\right) \right)$ and $\left(\mathrm{Pr}\left(\mathcal{D}_{1/1}\right),0 \right)$. Thus, the solution to the aforementioned optimization problem is

\begin{equation} 
T_\mathrm{aggr} = \max \left\{\mathrm{Pr}\left(\mathcal{D}_{2/2}\right), \mathrm{Pr}\left(\mathcal{D}_{1/1,2}\right)+\mathrm{Pr}\left(\mathcal{D}_{2/1,2}\right), \mathrm{Pr}\left(\mathcal{D}_{1/1}\right) \right\}.
\end{equation}

\section{Treating Interference as Noise}  \label{sec:stability_IAN}
In this section, we consider the case where the users treat the interfering signal as noise. When the $i$-th queue is empty at the source, while the $j$-th queue is not, the success probability for the $i$-th user is given by (\ref{eq:SNR}). 
When both queues are non-empty then the transmitted signal at a time slot from the source to the receivers is denoted by $x=x_1+x_2$. The received signal $y_i$ by the user $D_i$ is 
$y_i = h_i x + n_i$. The event $\mathcal{D}_{i/i,j}$ denotes that user $D_i$ is able to decode its intended packet, when both queues are non-empty. This is feasible when the received SINR is above a threshold $\gamma_i$ and is expressed by

\begin{equation} \label{eq:SINR_IAN}
\mathcal{D}_{i/i,j}= \left\lbrace \frac{P_i \abs{h_i}^2 d^{-\alpha}_{i} }{1+P_j\abs{h_i}^2 d^{-\alpha}_{i}} \geq \gamma_i \right\rbrace.
\end{equation}

The following lemma provides the success probability of the second user when both queues are non-empty.
Similarly we obtain the success probability for the first user.
\begin{lem}
The success probability of the second user, $D_2$ when both queues are non-empty is given by
\begin{align} \label{eq:Pr212}
\mathrm{Pr}\left(\mathcal{D}_{2/1,2} \right) = \mathbbm{1}\left\{ P_2 > \gamma_2 P_1 \right\} \exp \left(- \frac{\gamma_2 d_2 ^{\alpha}}{ P_2 -\gamma_2 P_1 } \right),
\end{align}
where $\mathbbm{1} \{\cdot\}$ is the indicator function.
\end{lem}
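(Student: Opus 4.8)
\emph{Proof proposal.} The plan is to start from the definition of the decoding event $\mathcal{D}_{2/1,2}$ in (\ref{eq:SINR_IAN}) with $i=2,\,j=1$, and to reduce the SINR inequality to a single threshold condition on the fading power $\abs{h_2}^2$. The Rayleigh model makes $\abs{h_2}^2$ exponentially distributed with unit mean (this is already implicit in (\ref{eq:SNR}), since there $\mathrm{Pr}\{\abs{h_i}^2 \ge s\}=e^{-s}$), so once the event is rewritten in the form $\{\abs{h_2}^2 \ge t\}$ for an appropriate constant $t$, its probability is simply $e^{-t}$.

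First I would clear the denominator in (\ref{eq:SINR_IAN}). Writing $g=\abs{h_2}^2\ge 0$, the denominator $1+P_1 g\, d_2^{-\alpha}$ is strictly positive, so the inequality $\frac{P_2 g\, d_2^{-\alpha}}{1+P_1 g\, d_2^{-\alpha}}\ge\gamma_2$ is equivalent, after collecting the terms containing $g$ on one side, to $g\, d_2^{-\alpha}(P_2-\gamma_2 P_1)\ge\gamma_2$.

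The step that requires care — and the one that produces the indicator — is that the coefficient $P_2-\gamma_2 P_1$ multiplying $g$ can change sign. I would split into two cases. If $P_2>\gamma_2 P_1$ the coefficient is positive, and dividing through gives the clean threshold $g\ge \frac{\gamma_2 d_2^{\alpha}}{P_2-\gamma_2 P_1}$; the exponential tail then yields the probability $\exp\!\left(-\frac{\gamma_2 d_2^{\alpha}}{P_2-\gamma_2 P_1}\right)$. If instead $P_2\le\gamma_2 P_1$, the left-hand side $g\, d_2^{-\alpha}(P_2-\gamma_2 P_1)$ is non-positive while $\gamma_2>0$, so the inequality fails for every $g\ge 0$ and the probability is $0$. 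Conceptually this reflects the fact that signal and interference share the same fading coefficient $\abs{h_2}^2$, so the SINR saturates at $P_2/P_1$ as the fading grows; unless $P_2/P_1>\gamma_2$, no channel realization can push the SINR above the threshold. Combining the two cases and absorbing the condition $P_2>\gamma_2 P_1$ into $\mathbbm{1}\{\cdot\}$ gives (\ref{eq:Pr212}). The analogous success probability for $D_1$ follows by swapping the roles of the two indices.
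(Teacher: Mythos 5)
Your proposal is correct and follows essentially the same route as the paper: reduce the SINR event to a threshold on $\abs{h_2}^2$, observe that the coefficient $P_2-\gamma_2 P_1$ determines feasibility (yielding the indicator), and evaluate the exponential tail of the unit-mean Rayleigh fading power. Your direct use of $\mathrm{Pr}\{\abs{h_2}^2\ge t\}=e^{-t}$ is in fact a cleaner presentation than the paper's intermediate integral over the density, but the argument is the same.
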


\begin{proof}
The transmission from the source to $D_2$ when both queues are non-empty is successful when 
\begin{eqnarray*}
\frac{P_2 \abs{h_2}^2 d^{-\alpha}_{2} }{1+P_1\abs{h_2}^2 d^{-\alpha}_{2}} \geq \gamma_2  \iff \gamma_2 \leq \abs{h_2}^2 d^{-\alpha}_{2} (P_2 -\gamma_2 P_1) \\
\iff \gamma_2 \leq \abs{h_2}^2 d^{-\alpha}_{2} (P_2 -\gamma_2 (P-P_2)).
\end{eqnarray*}
Note that $P_1+P_2=P$. Thus, if $P_2 -\gamma_2 (P-P_2) < 0$ then the success probability is zero because the initial inequality is not feasible. 
Thus, if $P_2 > \frac{\gamma_2}{1+\gamma_2}P$ then
\begin{eqnarray*}
\abs{h_2}^2 \geq \frac{\gamma_2}{d^{-\alpha}_{2} (P_2 -\gamma_2 (P-P_2))}.
\end{eqnarray*}
Assuming Rayleigh block fading, the channel power is exponentially distributed, i.e. $\abs{h_2}^2 \sim \exp(1)$, and the success probability can be expressed as
\begin{equation}
\begin{aligned}
\mathrm{Pr} \left(\mathcal{D}_{2/1,2} \right)=\mathrm{Pr} \left[\abs{h_2}^2 \geq \frac{\gamma_2}{d^{-\alpha}_{2} (P_2 -\gamma_2 (P-P_2))}  \right] = \\ = \int_0^{\infty} \! \mathrm{Pr} \left[x \geq \frac{\gamma_2}{d^{-\alpha}_{2} (P_2 -\gamma_2 (P-P_2))}\right] f_{\abs{h_2}^2}(x)  \,  \mathrm{d}x.
\end{aligned}
\end{equation}

Thus,
\begin{equation}
\begin{aligned}
\mathrm{Pr} \left(\mathcal{D}_{2/1,2} \right)= \int_0^{\infty} \! \left[ 1 - F_{\abs{h_2}^2} \left(\frac{\gamma_2}{d^{-\alpha}_{2} (P_2 -\gamma_2 (P-P_2))}\right) \right] f_{\abs{h_2}^2}(x)  \, \mathrm{d}x,
\end{aligned}
\end{equation}
with $f_{\abs{h_2}^2}(x)= \exp(-x)$ and $F_{\abs{h_2}^2} (x) = 1-\exp(-x)$.

To summarize, the success probability for the second user when both queues at the source are non-empty is given by (\ref{eq:Pr212}), where $\mathbbm{1} \{\cdot\}$ is the indicator function.
\end{proof}

Thus if $P_2 > \gamma_2 P_1$ and $P_1 > \gamma_1 P_2$ then $\mathrm{Pr}\left(\mathcal{D}_{1/1,2} \right)=\exp \left(- \frac{\gamma_1 d_1 ^{\alpha}}{ P_1 -\gamma_1 P_2 } \right)$ and $\mathrm{Pr}\left(\mathcal{D}_{2/1,2} \right)=\exp \left(- \frac{\gamma_2 d_2 ^{\alpha}}{ P_2 -\gamma_2 P_1 } \right)$ and after combining (\ref{eq:R_1}) and (\ref{eq:R_2}) we obtain the stability region $\mathcal{R} = \mathcal{R}_1 \bigcup \mathcal{R}_2$.
 
\begin{remark}
Note that $\mathrm{Pr}\left(\mathcal{D}_{1/1,2} \right) \neq 0$ and $\mathrm{Pr}\left(\mathcal{D}_{2/1,2} \right)\neq 0$ if and only
if $\gamma_1 \gamma_2 \leq 1$.
\end{remark}
The previous condition can be obtained by the conditions $P_2 > \gamma_2 P_1$, $P_1 > \gamma_1 P_2$ and $P_1+P_2=P$.

\section{Successive Decoding}  \label{sec:stability_SC}
In this section we consider a decoding strategy where the receiver with the better channel applies successive decoding and 
the other one treats interference as noise. More specifically, we assume that the channel from the transmitter to $D_1$ is better than that to $D_2$.

At the receiver side, $D_2$ treats the message of $D_1$ as noise and decodes its data from $y_2$. 
Receiver $D_2$ is able to decode its intended packet if and only if the received signal-to-interference-plus-noise ratio (SINR) is greater than $\gamma_2$.
The probability that the communication between the transmitter and $D_2$ is successful when both queues are non-empty is given by (\ref{eq:Pr212}). 

Receiver $D_1$, which has a better channel, performs successive decoding, i.e. it decodes first the message of $D_2$, then it subtracts it from the received signal, and proceeds to decode its message with a single-user decoder. 

The following lemma provides the success probability of the first user when both queues are non-empty.

\begin{lem}
The success probability of the first user, $D_1$ when both queues are non-empty is given by
\begin{align} \label{eq:Pr112_SC}
\mathrm{Pr}\left(\mathcal{D}_{1/1,2} \right) = \mathbbm{1}\left\{ \gamma_2 P_1 < P_2 \leq  P_1 \frac{\gamma_2 (1+\gamma_1)}{\gamma_1} \right\} \exp \left(- \frac{\gamma_2 d_1 ^{\alpha}}{ P_2 -\gamma_2 P_1 } \right) + \mathbbm{1}\left\{ P_2 > P_1 \frac{\gamma_2 (1+\gamma_1)}{\gamma_1} \right\} \exp \left(- \frac{\gamma_1 d^{\alpha}_{1}}{P_{1}}\right)
\end{align}
where $\mathbbm{1} \{\cdot\}$ is the indicator function.
\end{lem}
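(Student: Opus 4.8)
The plan is to translate the successive-decoding operation at $D_1$ into two simultaneous threshold conditions on the fading power $|h_1|^2$, and then compute the probability that both hold using the exponential distribution. Since $D_1$ first decodes the first-level message $x_2$ while treating its own signal $x_1$ as noise, the condition for this stage is that the SINR for $x_2$ at $D_1$, namely $\frac{P_2 |h_1|^2 d_1^{-\alpha}}{1 + P_1 |h_1|^2 d_1^{-\alpha}}$, exceeds $\gamma_2$. After subtracting the decoded $x_2$, $D_1$ decodes its own message with a single-user decoder, so the second-stage condition is that the post-cancellation SNR $P_1 |h_1|^2 d_1^{-\alpha}$ exceeds $\gamma_1$. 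The event $\mathcal{D}_{1/1,2}$ is precisely the intersection of these two events.

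First I would rearrange the first condition exactly as in the proof of the previous lemma: it is feasible only if $P_2 > \gamma_2 P_1$, in which case it is equivalent to $|h_1|^2 \geq \frac{\gamma_2 d_1^{\alpha}}{P_2 - \gamma_2 P_1}$. The second condition is simply $|h_1|^2 \geq \frac{\gamma_1 d_1^{\alpha}}{P_1}$. Hence, conditional on $P_2 > \gamma_2 P_1$, the success event reduces to $|h_1|^2 \geq \max\left\{\frac{\gamma_2 d_1^{\alpha}}{P_2 - \gamma_2 P_1}, \frac{\gamma_1 d_1^{\alpha}}{P_1}\right\}$.

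The central step is the case analysis determining which of the two thresholds binds. Comparing $\frac{\gamma_2}{P_2 - \gamma_2 P_1}$ against $\frac{\gamma_1}{P_1}$ and clearing denominators yields the crossover point $P_2 = \frac{\gamma_2 (1+\gamma_1)}{\gamma_1} P_1$: the decode-$x_2$ threshold dominates when $\gamma_2 P_1 < P_2 \leq \frac{\gamma_2(1+\gamma_1)}{\gamma_1} P_1$, while the decode-$x_1$ threshold dominates when $P_2 > \frac{\gamma_2(1+\gamma_1)}{\gamma_1} P_1$. I would also note that the latter range automatically satisfies the feasibility requirement $P_2 > \gamma_2 P_1$, since $(1+\gamma_1)/\gamma_1 > 1$, so the two indicator ranges partition the feasible region cleanly.

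Finally, using $|h_1|^2 \sim \exp(1)$ so that $\mathrm{Pr}[|h_1|^2 \geq t] = e^{-t}$, I would evaluate the complementary CDF at the dominant threshold in each regime, obtaining $\exp\left(-\frac{\gamma_2 d_1^{\alpha}}{P_2 - \gamma_2 P_1}\right)$ and $\exp\left(-\frac{\gamma_1 d_1^{\alpha}}{P_1}\right)$ respectively, and assemble them with the two indicator functions to recover (\ref{eq:Pr112_SC}). The main obstacle is the case analysis of which threshold binds, together with verifying that the feasibility condition $P_2 > \gamma_2 P_1$ is consistent with both indicator ranges; the probability computation itself is routine once the correct threshold is identified.
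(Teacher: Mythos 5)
Your proposal is correct and follows essentially the same route as the paper's own proof: the same two simultaneous threshold conditions on $\abs{h_1}^2$ (decode $x_2$ treating $x_1$ as noise, then decode $x_1$ after cancellation), the same reduction to $\abs{h_1}^2 \geq \max\left\{\frac{\gamma_2 d_1^{\alpha}}{P_2-\gamma_2 P_1}, \frac{\gamma_1 d_1^{\alpha}}{P_1}\right\}$ with feasibility requiring $P_2 > \gamma_2 P_1$, the same crossover analysis at $P_2 = \frac{\gamma_2(1+\gamma_1)}{\gamma_1}P_1$, and the same exponential tail evaluation. Your added check that the second indicator range automatically satisfies $P_2 > \gamma_2 P_1$ is a small point the paper leaves implicit, but otherwise the arguments coincide.
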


\begin{proof}
The successive decoding is feasible at the first receiver if 

\begin{eqnarray}
\left\lbrace \frac{P_2 \abs{h_1}^2 d^{-\alpha}_1} {1+P_1\abs{h_1}^2 d^{-\alpha}_{1}} \geq \gamma_2 ,\text{ } P_1  \abs{h_1}^2 d^{-\alpha}_{1} \geq \gamma_1 \right\rbrace.
\end{eqnarray}

From the first condition we have that
\begin{eqnarray*}
\frac{P_2 \abs{h_1}^2 d^{-\alpha}_{1} }{1+P_1\abs{h_1}^2 d^{-\alpha}_{1}} \geq \gamma_2  \iff \gamma_2 \leq \abs{h_1}^2 d^{-\alpha}_{1} (P_2 -\gamma_2 P_1). 
\end{eqnarray*}

If $P_2 \leq \gamma_2 P_1$, the previous condition is not feasible, thus $D_1$ is not able to decode the information in this case.
If $P_2 > \gamma_2 P_1$ then successive decoding is feasible at the first receiver if 

\begin{eqnarray}
\left\lbrace \abs{h_1}^2 \geq \frac{\gamma_2  d^{\alpha}_{1}}{P_2 -\gamma_2 P_1}  ,\text{ } \abs{h_1}^2  \geq \frac{\gamma_1 d^{\alpha}_{1}}{P_1} \right\rbrace \iff \abs{h_1}^2 \geq \max \left\lbrace\frac{\gamma_2  d^{\alpha}_{1}}{P_2 -\gamma_2 P_1}, \frac{\gamma_1 d^{\alpha}_{1}}{P_1}\right\rbrace.
\end{eqnarray}

The success probability can be expressed as
\begin{equation}
\begin{aligned}
\mathrm{Pr} \left(\mathcal{D}_{1/1,2} \right)=\mathrm{Pr} \left[\abs{h_1}^2 \geq \max \left\lbrace\frac{\gamma_2  d^{\alpha}_{1}}{P_2 -\gamma_2 P_1}, \frac{\gamma_1 d^{\alpha}_{1}}{P_1}\right\rbrace  \right] = \\ = \int_0^{\infty} \! \mathrm{Pr} \left[x \geq \max \left\lbrace\frac{\gamma_2  d^{\alpha}_{1}}{P_2 -\gamma_2 P_1}, \frac{\gamma_1 d^{\alpha}_{1}}{P_1}\right\rbrace\right] f_{\abs{h_1}^2}(x)  \,  \mathrm{d}x.
\end{aligned}
\end{equation}

Thus,
\begin{equation}
\begin{aligned}
\mathrm{Pr} \left(\mathcal{D}_{1/1,2} \right)= \int_0^{\infty} \! \left[ 1 - F_{\abs{h_1}^2} \left(\max \left\lbrace\frac{\gamma_2  d^{\alpha}_{1}}{P_2 -\gamma_2 P_1}, \frac{\gamma_1 d^{\alpha}_{1}}{P_1}\right\rbrace\right) \right] f_{\abs{h_1}^2}(x)  \, \mathrm{d}x.
\end{aligned}
\end{equation}
Note that $f_{\abs{h_1}^2}(x)= \exp(-x)$ and $F_{\abs{h_1}^2} (x) = 1-\exp(-x)$.

From the above we obtain that the success probability is given by 
\begin{equation}
\mathrm{Pr} \left(\mathcal{D}_{1/1,2} \right)= \begin{cases} \exp \left(- \frac{\gamma_2 d_1 ^{\alpha}}{ P_2 -\gamma_2 P_1 } \right) , & \mbox{if } \gamma_2 P_1 < P_2 \leq  P_1 \frac{\gamma_2 (1+\gamma_1)}{\gamma_1}  \\ \exp \left(- \frac{\gamma_1 d^{\alpha}_{1}}{P_{1}}\right), & \mbox{if } P_2 > P_1 \frac{\gamma_2 (1+\gamma_1)}{\gamma_1}. \end{cases}
\end{equation}
The success probability is summarized in (\ref{eq:Pr112_SC}).
\end{proof}

\begin{remark}
From (\ref{eq:Pr112_SC}), we observe that if the following condition is satisfied
\begin{eqnarray} \label{eq:condition_1stRx}
P_2 > P_1 \frac{\gamma_2 (1+\gamma_1)}{\gamma_1},
\end{eqnarray}
then the link success probability for $D_1$ when both queues are non-empty is given by
\begin{eqnarray}
\mathrm{Pr}\left(\mathcal{D}_{1/1,2}\right) =\mathrm{Pr}\left(\mathcal{D}_{1/1}\right) = \exp \left(- \frac{\gamma_1 d^{\alpha}_{1}}{P_{1}}\right).
\end{eqnarray}
This means that the transmission for the first user is not affected by the transmission to the second one.
\end{remark}

In the remainder, we consider two simple schemes regarding the assigned transmission power for each receiver's queues. The first scheme corresponds to the case where we fix the transmission power $P_i$ for the $i$-th receiver, such that $P_1+P_2=P$, i.e. even if the second queue is empty, the assigned power to the first will remain $P_1$. The second scheme is rather natural when a user's queue is empty. Thus, the transmitter adapts the power according to the queue state of each receiver, i.e. if one queue is empty, then all power $P$ is allocated to the other user's signal\footnote{This power allocation scheme can be also applied when interference is treated as noise at both receivers (see Section \ref{sec:stability_IAN}), whose performance is evaluated in Section \ref{sec:NumResults}.}.

\subsection{Fixed Power Scheme}
We assume here that the transmitter assigns fixed power $P_1$ (resp. $P_2$) at the $D_1$ (resp. $D_2$) on every time slot.
\subsubsection{The case where $P_2 > P_1 \frac{\gamma_2 (1+\gamma_1)}{\gamma_1}$}
The service rate seen by the first queue is given by (\ref{eq:mu1}).
Since constant transmit power $P_1$ is used and $D_1$ has better channel than $D_2$, from (\ref{eq:Pr112_SC}) we have that $\mathrm{Pr}\left(\mathcal{D}_{1/1,2} \right) = \mathrm{Pr} \left(\mathcal{D}_{1/1} \right)$. Thus, we have
\begin{eqnarray}
\mu_1 = \mathrm{Pr} \left(\mathcal{D}_{1/1} \right).
\end{eqnarray}

From Loynes' criterion for stability~\cite{b:Loynes}, the first queue is stable if and only if $\lambda_1 < \mu_1$.
From Little's theorem (Ch. 3.2 in~\cite{b:Bertsekas}), we have that
\begin{eqnarray} \label{eq:Prnonempty_fix}
\mathrm{Pr}\left(Q_1 > 0 \right) = \frac{\lambda_1}{\mathrm{Pr}\left(\mathcal{D}_{1/1} \right)}.
\end{eqnarray}

The service rate for the second queue is given by (\ref{eq:mu2}). After substituting (\ref{eq:Prnonempty_fix}) into (\ref{eq:mu2}) we obtain
\begin{eqnarray}
\mu_2 = \mathrm{Pr}\left(\mathcal{D}_{2/2} \right) + \frac{\mathrm{Pr}\left(\mathcal{D}_{2/1,2} \right) - \mathrm{Pr}\left(\mathcal{D}_{2/2} \right)}{\mathrm{Pr}\left(\mathcal{D}_{1/1} \right)}\lambda_1.
\end{eqnarray}

From Loynes' criterion we have that the second queue is stable if and only if $\lambda_2 < \mu_2$. The stability region for the channel is given then by (\ref{eq:Rfix}) and is depicted in Fig.~\ref{fig:region}.
\begin{align} \label{eq:Rfix}
\mathcal{R} = \left\lbrace (\lambda_{1},\lambda_{2}): \frac{\lambda_2}{\mathrm{Pr}\left(\mathcal{D}_{2/2} \right)} + \frac{\mathrm{Pr}\left(\mathcal{D}_{2/2} \right) - \mathrm{Pr}\left(\mathcal{D}_{2/1,2} \right)}{\mathrm{Pr}\left(\mathcal{D}_{1/1} \right)\mathrm{Pr}\left(\mathcal{D}_{2/2} \right)}\lambda_1 < 1  , \lambda_1 < \mathrm{Pr} \left(\mathcal{D}_{1/1} \right)  \right\rbrace.
\end{align}

Recall that the success probability $\mathrm{Pr}\left(\mathcal{D}_{2/1,2} \right)$ is given by (\ref{eq:Pr212}).
Note that in this case we do not face the problem of coupled queues as mentioned in the general case described in Section \ref{sec:stability_general}.

\subsubsection{The case where $\gamma_2 P_1 < P_2 \leq  P_1 \frac{\gamma_2 (1+\gamma_1)}{\gamma_1}$}
In this case clearly $\mathrm{Pr}\left(\mathcal{D}_{1/1,2} \right) \neq \mathrm{Pr} \left(\mathcal{D}_{1/1} \right)$. We obtain that
\begin{equation} \label{eq:P112_case1}
\mathrm{Pr}\left(\mathcal{D}_{1/1,2} \right) = \exp \left(- \frac{\gamma_2 d_1 ^{\alpha}}{ P_2 -\gamma_2 P_1 } \right).
\end{equation}

In this case the queues are coupled so we have to use the results obtained in Section \ref{sec:stability_general} derived using the stochastic dominance technique. The stability region is obtained by using (\ref{eq:P112_case1}) and (\ref{eq:Pr212}) into (\ref{eq:R_1}) and (\ref{eq:R_2}).

\begin{figure}[]
\centering
\includegraphics[scale=0.5]{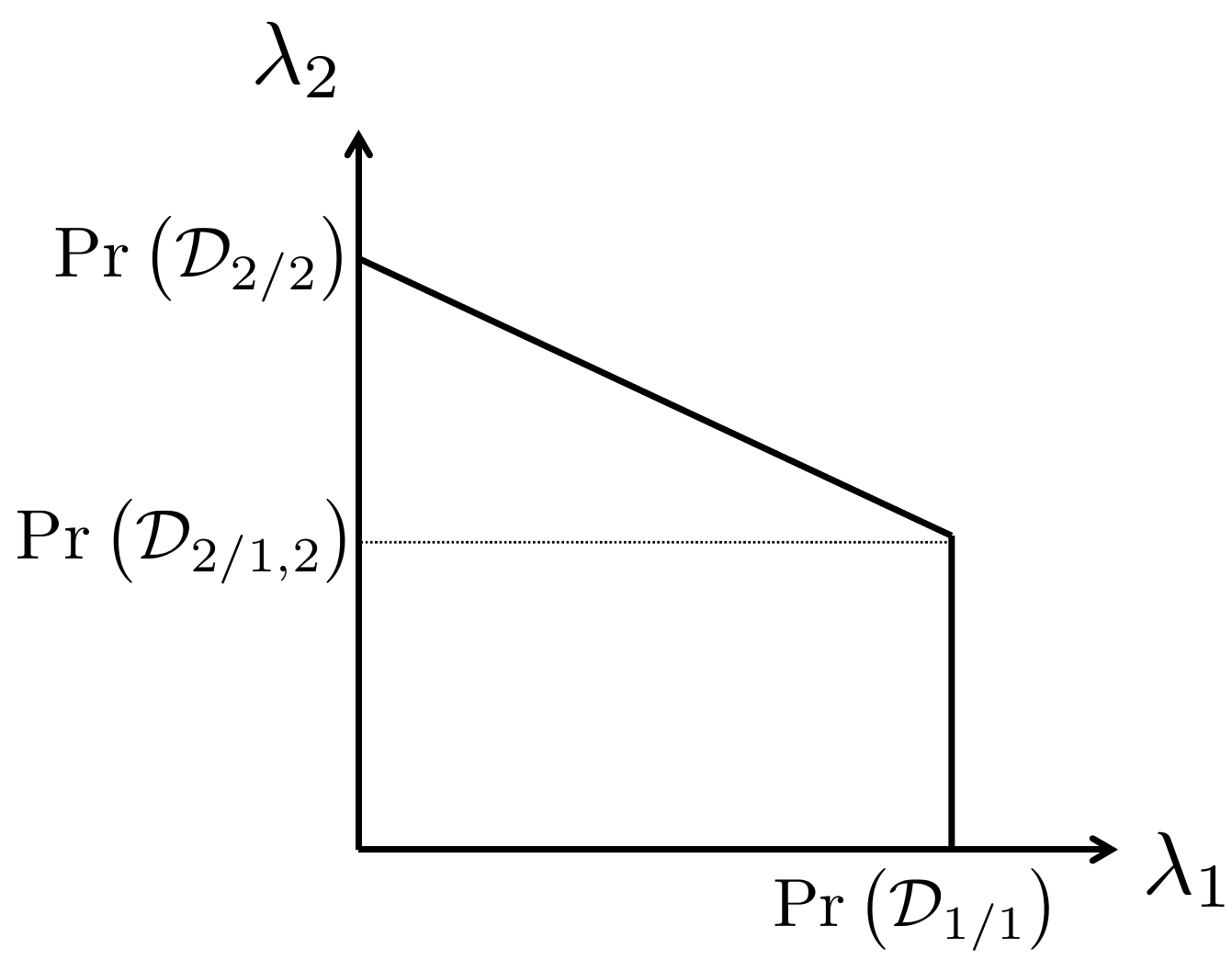}
\caption{The stability region for fixed transmit powers when $P_2 > P_1 \frac{\gamma_2 (1+\gamma_1)}{\gamma_1}$ and $D_1$ applies successive decoding and $D_2$ treats interference as noise. In this case the region is always convex.}
\centering
\label{fig:region}
\end{figure}

\subsection{Variable Power Scheme based on Queue State}
In this part, we consider a simple adaptive scheme regarding the power allocated to the packets of each receiver. 
The power allocation is performed as described earlier, namely: when both queues are not empty, the transmit power for the first and second queue is $P_1$ and $P_2$, respectively, satisfying $P_1+P_2=P$. However, when the queue of $i$-th receiver is empty, the total transmit power $P$ is used for transmitting the packets for the $j$-th (where $j \neq i$) receiver.

The average service rates of the first and the second queue, $\mu_1$ and $\mu_2$ are given by (\ref{eq:mu1}) and (\ref{eq:mu2}) respectively.
The success probabilities $\mathrm{Pr}\left(\mathcal{D}_{i/i} \right)$ for $i=1,2$ are given by
\begin{eqnarray} \label{eq:Prii}
\mathrm{Pr}\left(\mathcal{D}_{i/i} \right) = \exp \left( - \frac{\gamma_i d_i^{\alpha}}{P} \right),
\end{eqnarray}
since when a queue is empty, the transmitter assigns all power to the other queue, and can be obtained from (\ref{eq:SNR}). The success probability $\mathrm{Pr}\left(\mathcal{D}_{1/1,2} \right)$ is given by (\ref{eq:Pr112_SC}).

In the above scheme, it is evident that $\mathrm{Pr}\left(\mathcal{D}_{1/1} \right) \neq \mathrm{Pr}\left(\mathcal{D}_{1/1,2} \right)$, and as a result, there is coupling between the queues. Thus, we can use directly the stability region obtained in Section \ref{sec:stability_general} by substituting the values of the success probabilities.

The stability region $\mathcal{R}$ has two parts, $\mathcal{R}_1$ and $\mathcal{R}_2$, where $\mathcal{R} = \mathcal{R}_1 \bigcup \mathcal{R}_2$.
If power $\gamma_2 P_1 < P_2 \leq  P_1 \frac{\gamma_2 (1+\gamma_1)}{\gamma_1}$, the region is given by
$\mathcal{R} = \mathcal{R}_1 \bigcup \mathcal{R}_2$, where $\mathcal{R}_1$ and $\mathcal{R}_2$ are given by (\ref{eq:R_1_changing2}) and (\ref{eq:R_2_changing2}) respectively.
\begin{align}
\label{eq:R_1_changing2}
\mathcal{R}_1= \left\lbrace  (\lambda_{1},\lambda_{2}): \frac{\lambda_1}{\exp \left( - \frac{\gamma_1 d_1^{\alpha}}{P} \right)} + \frac{\exp \left( - \frac{\gamma_1 d_1^{\alpha}}{P} \right) - \exp \left(- \frac{\gamma_2 d_1 ^{\alpha}}{ P_2 -\gamma_2 P_1 } \right)}{\exp \left( - \frac{\gamma_1 d_1^{\alpha}}{P} \right) \exp \left(- \frac{\gamma_2 d_2 ^{\alpha}}{ (1+\gamma_2) P_2 - \gamma_2 } \right)}\lambda_2 < 1, 
 \lambda_2 <  \exp \left(- \frac{\gamma_2 d_2 ^{\alpha}}{ (1+\gamma_2) P_2 - \gamma_2 } \right) \right\rbrace.
\end{align}
\begin{align}
\label{eq:R_2_changing2}
\mathcal{R}_2= \left\lbrace  (\lambda_{1},\lambda_{2}): \frac{\lambda_2}{\exp \left( - \frac{\gamma_2 d_2^{\alpha}}{P} \right)} + \frac{\exp \left( - \frac{\gamma_2 d_2^{\alpha}}{P} \right) - \exp \left( - \frac{\gamma_2 d_2^{\alpha}}{(1+\gamma_2) P_2 - \gamma_2} \right)}{\exp \left( - \frac{\gamma_2 d_2^{\alpha}}{P} \right) \exp \left(- \frac{\gamma_2 d_1 ^{\alpha}}{ P_2 -\gamma_2 P_1 } \right)}\lambda_2 < 1, \lambda_1 < \exp \left(- \frac{\gamma_2 d_1 ^{\alpha}}{ P_2 -\gamma_2 P_1 } \right) \right\rbrace.
\end{align}
If $P_2 > P_1 \frac{\gamma_2 (1+\gamma_1)}{\gamma_1}$, then $\mathcal{R}_1$ is given in (\ref{eq:R_1_changing}) after using (\ref{eq:Pr212}), (\ref{eq:Pr112_SC}) and (\ref{eq:Prii}) along with (\ref{eq:R_1}).
\begin{align}
\label{eq:R_1_changing}
\mathcal{R}_1= \left\lbrace  (\lambda_{1},\lambda_{2}): \frac{\lambda_1}{\exp \left( - \frac{\gamma_1 d_1^{\alpha}}{P} \right)} + \frac{\exp \left( - \frac{\gamma_1 d_1^{\alpha}}{P} \right) - \exp \left( - \frac{\gamma_1 d_1^{\alpha}}{P_1} \right)}{\exp \left( - \frac{\gamma_1 d_1^{\alpha}}{P} \right) \exp \left(- \frac{\gamma_2 d_2 ^{\alpha}}{ (1+\gamma_2) P_2 - \gamma_2 } \right)}\lambda_2 < 1, 
 \lambda_2 <  \exp \left(- \frac{\gamma_2 d_2 ^{\alpha}}{ (1+\gamma_2) P_2 - \gamma_2 } \right) \right\rbrace.
\end{align}
Note that if $P_2 < \gamma_2 P_1$ the first dominant system is unstable and the region is empty, i.e. $\mathcal{R}_1 = \emptyset$.
Similarly $\mathcal{R}_2$ is given by (\ref{eq:R_2_changing}). 
\begin{align}
\label{eq:R_2_changing}
\mathcal{R}_2= \left\lbrace  (\lambda_{1},\lambda_{2}): \frac{\lambda_2}{\exp \left( - \frac{\gamma_2 d_2^{\alpha}}{P} \right)} + \frac{\exp \left( - \frac{\gamma_2 d_2^{\alpha}}{P} \right) - \exp \left( - \frac{\gamma_2 d_2^{\alpha}}{(1+\gamma_2) P_2 - \gamma_2} \right)}{\exp \left( - \frac{\gamma_2 d_2^{\alpha}}{P} \right) \exp \left(- \frac{\gamma_1 d_1 ^{\alpha}}{ P_1 } \right)}\lambda_2 < 1, \lambda_1 <  \exp \left(- \frac{\gamma_1 d_1 ^{\alpha}}{ P_1 } \right) \right\rbrace.
\end{align}
The indistinguishability argument mentioned in Section \ref{sec:stability_general} applies to this case as well.

\begin{remark}
The case where both receivers apply successive decoding can also be derived from the results above; however, it is omitted since it does not provide better results than the case we consider. Additionally, it can even yield poorer results than in the case where both receivers treat interference as noise.
\end{remark}

\section{Numerical Results} \label{sec:NumResults}
In this section, we evaluate numerically the analytical results obtained in the previous sections.
We let $P_1=80$, $P_2=120$, $P =200$, $\alpha=2$, $d_1=10$ and $d_2=13$. Tables \ref{tab:succ_probs} and \ref{tab:succ_probs_interference} summarize the values of success probabilities under different schemes and various values for $\gamma_1$ and $\gamma_2$. TIN denotes the case that both users treat interference as noise and SD denotes the case where the first user applies successive decoding. Note that for SD we have that $\mathrm{Pr}\left(\mathcal{D}_{1/1,2} \right) = \mathrm{Pr}\left(\mathcal{D}_{1/1} \right)$, because condition (\ref{eq:condition_1stRx}) holds. PC denotes the case where the variable power scheme based on queue state is used.

\begin{table}[ht]
\begin{center}
\begin{tabular}{|c|c|c||c|c|}
\hline
$\gamma_1$ & $\gamma_2$ & Power & $\mathrm{Pr}\left(\mathcal{D}_{1/1} \right)$ & $\mathrm{Pr}\left(\mathcal{D}_{2/2} \right)$ \\\hline\hline
0.5 & 0.4 & $P_1=80$, $P_2=120$ &  0.5353 &  0.5203  \\ \hline
0.5 & 0.4 & PC ($P_1=P_2=200$) &  0.7788 & 0.6757 \\ \hline
1.2 & 0.7 & $P_1=80$, $P_2=120$ &  0.2231 & 0.3188   \\ \hline
1.2 & 0.7 & PC ($P_1=P_2=200$) & 0.5488 & 0.5036  \\ \hline
\end{tabular}
\end{center}
\caption{$\mathrm{Pr}\left(\mathcal{D}_{1/1} \right)$ and $\mathrm{Pr}\left(\mathcal{D}_{2/2} \right)$ for various values of $\gamma_1$ and $\gamma_2$ for $d_1=10$, $d_2=14$ and $\alpha=2$.}
\label{tab:succ_probs}
\end{table}

\begin{table}[ht]
\begin{center}
\begin{tabular}{|c|c|c||c|c|}
\hline
$\gamma_1$ & $\gamma_2$ & Scheme & $\mathrm{Pr}\left(\mathcal{D}_{1/1,2} \right)$ & $\mathrm{Pr}\left(\mathcal{D}_{2/1,2} \right)$ \\\hline\hline
0.5 & 0.4 & TIN &  0.0821 &  0.4103  \\ \hline
0.5 & 0.4 & SD &  0.5353 & 0.4103  \\ \hline
1.2 & 0.7 & TIN &  0 & 0.1172   \\ \hline
1.2 & 0.7 & SD & 0.2231 & 0.1172  \\ \hline
\end{tabular}
\end{center}
\caption{$\mathrm{Pr}\left(\mathcal{D}_{1/1,2} \right)$ and $\mathrm{Pr}\left(\mathcal{D}_{2/1,2} \right)$ for various values of $\gamma_1$ and $\gamma_2$ for $P_1=80$, $P_2=120$, $d_1=10$, $d_2=14$ and $\alpha=2$.}
\label{tab:succ_probs_interference}
\end{table}

\subsection{Stability Region}
We provide here numerical results for the analysis in Sections \ref{sec:stability_IAN} and \ref{sec:stability_SC}. 
The stability region for $\gamma_1=0.5$, $\gamma_2=0.4$ is depicted in Fig. \ref{fig:STR_1}. SD-PC has the best performance, in the low $\lambda_1$ regime, while in the high $\lambda_2$ regime, TIN-PC has better performance than SD. This was expected because of the power control mechanism that allows to allocate the whole power to the non-empty queue when one queue is empty.

\begin{figure}[]
\centering
\includegraphics[scale=0.75]{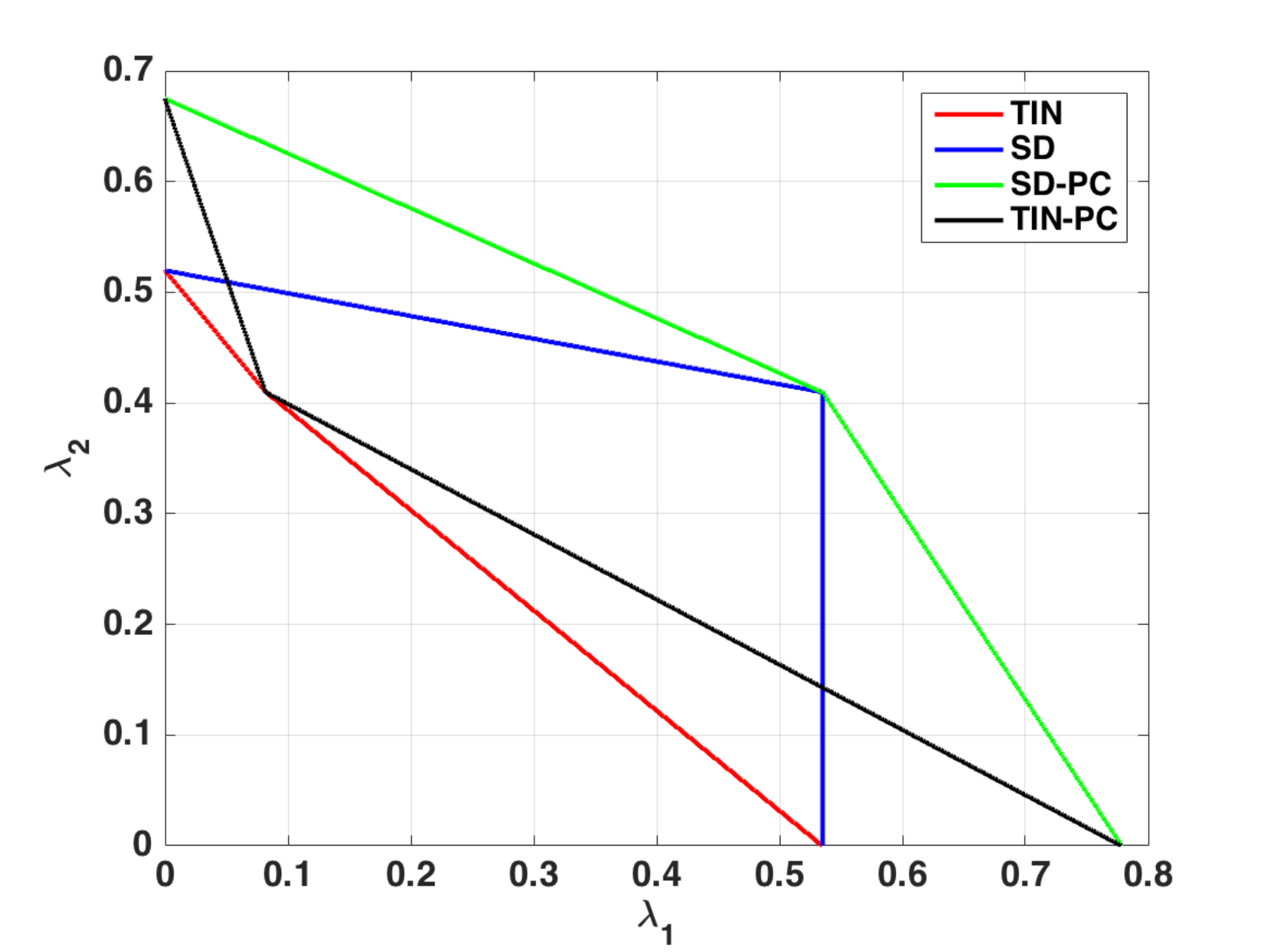}
\caption{The stability region for $\gamma_1=0.5$ and $\gamma_2=0.4$ when $P_1=80$ and $P_2=120$.}
\centering
\label{fig:STR_1}
\end{figure}

The stability region for $\gamma_1=1.2$, $\gamma_2=0.7$ is depicted in Fig. \ref{fig:STR_2}. Note that SD-PC has the best performance; TIN-PC has better performance for the low-$\lambda_2$ and high-$\lambda_1$ regimes while SD is better for low-$\lambda_1$ and high-$\lambda_2$ regimes. 

\begin{figure}[]
\centering
\includegraphics[scale=0.75]{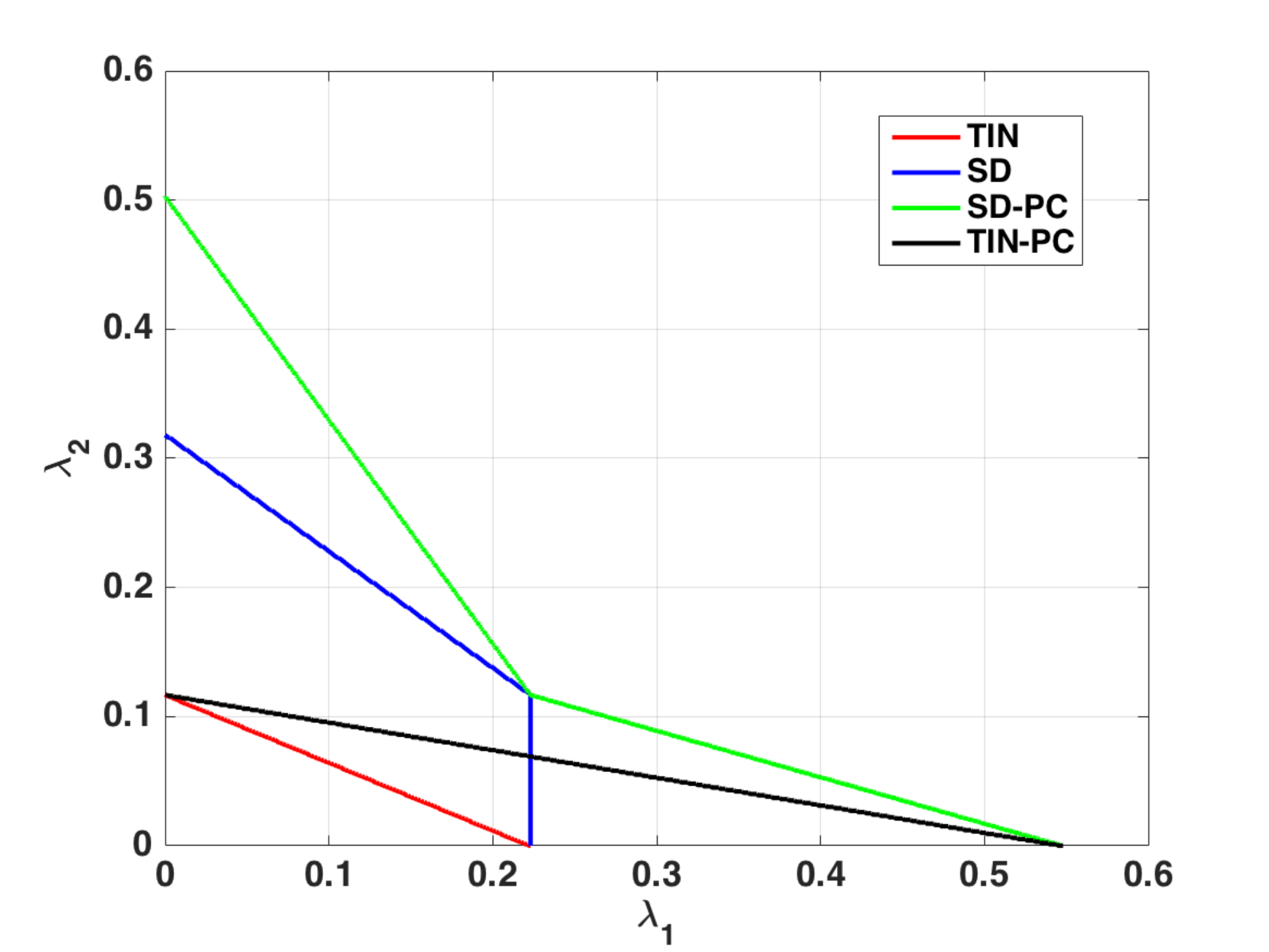}
\caption{The stability region for $\gamma_1=1.2$ and $\gamma_2=0.7$ when $P_1=80$ and $P_2=120$.}
\centering
\label{fig:STR_2}
\end{figure}
In both cases, SD-PC can support higher arrival rates than TIN.

\subsection{Closure of the Stability Region}
In this subsection, we numerically evaluate the closure of the stability
region as discussed in Section \ref{sec:closure}. The figures in this subsection are approximations of the closure because they obtained by plotting the stability region for different power values. We consider two cases for the SINR thresholds $\gamma_1=0.5$, $\gamma_2=0.4$ and $\gamma_1=1.2$, $\gamma_2=0.7$.

The closure of the stability region for $\gamma_1=0.5$, $\gamma_2=0.4$ is depicted in Fig. \ref{fig:Closure_Case1}. The impact of the power control policy is noticeable in both TIN and SD cases. The best performance (larger region) is achieved by using SD-PC. Both SD-PC and TIN-PC have convex regions, which means that the performance is superior to time-sharing schemes. However, the region for TIN-PC has a triangular shape, which means that its performance is very close to that of time-sharing. In contrast, the region in both TIN and SD cases is non-convex. In the SD case, for large values of $\lambda_1$ and small values of $\lambda_2$ there is a part missing when we compare to SD-PC, which destroys the convexity. The TIN case has worse performance than time-sharing.

\begin{figure}[htb]
\centering
  \subfloat[TIN]{%
    \includegraphics[scale=0.45]{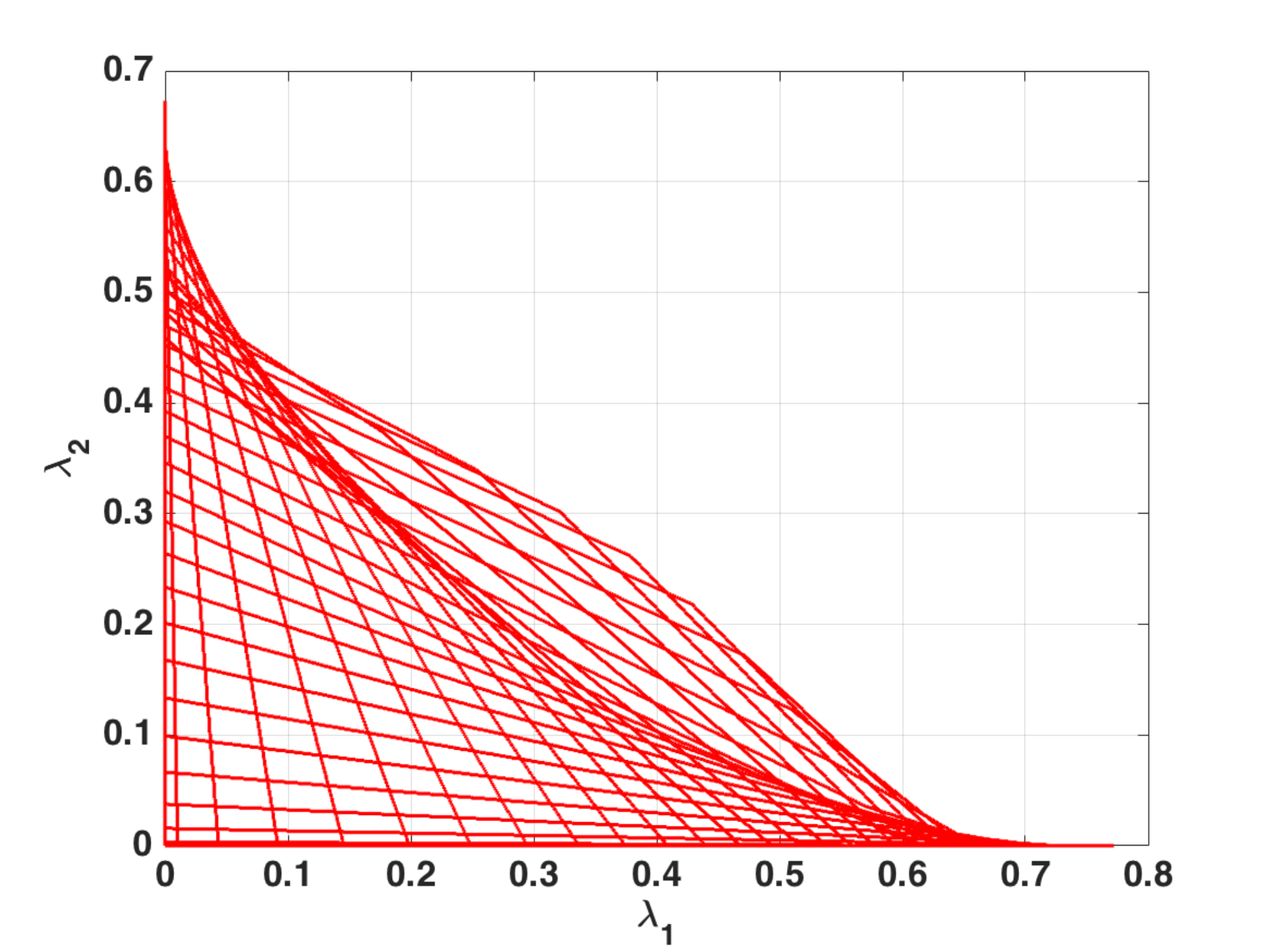}}
  \subfloat[TIN-PC]{%
    \includegraphics[scale=0.45]{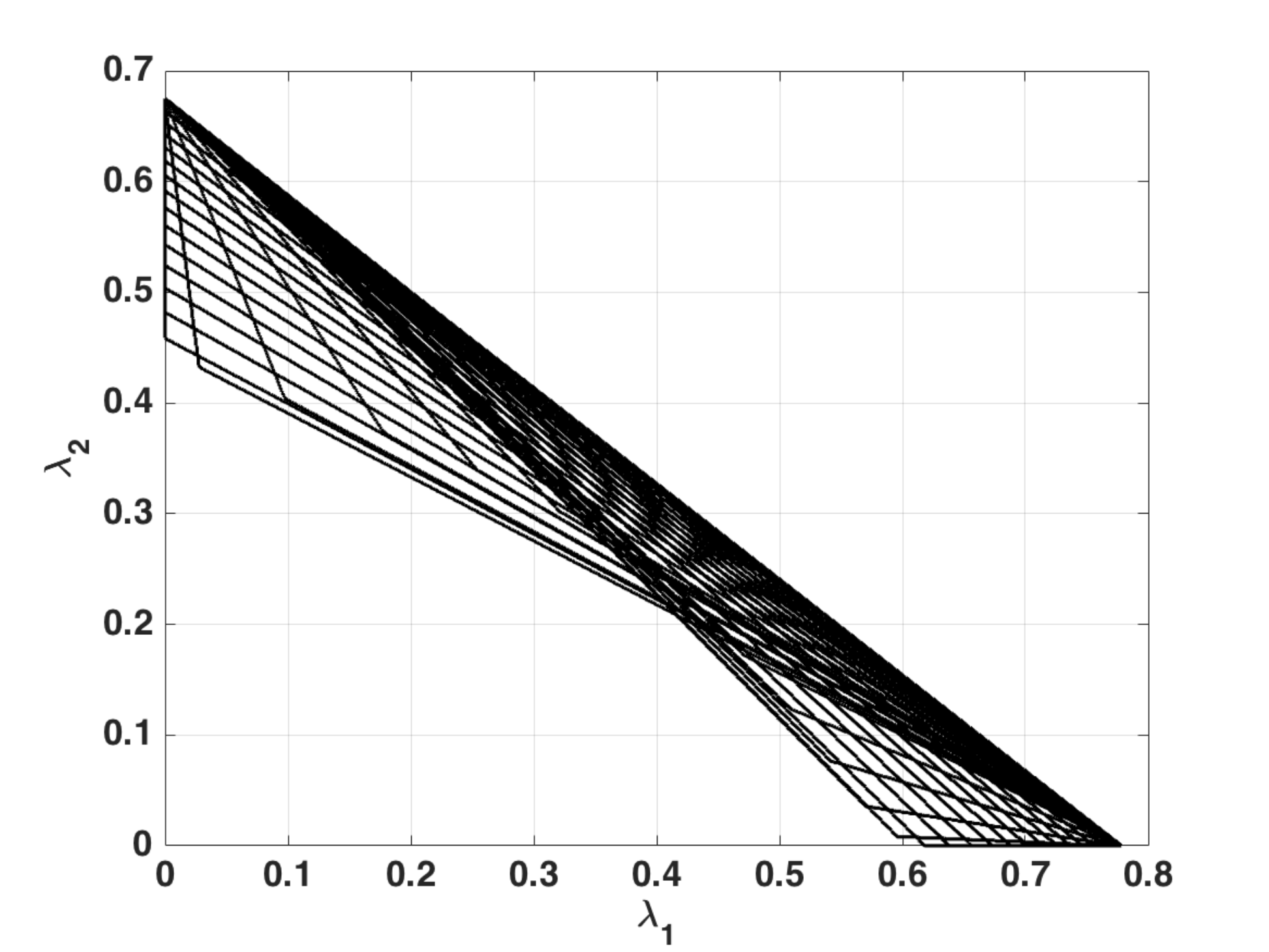}} \\
  \subfloat[SD]{%
    \includegraphics[scale=0.45]{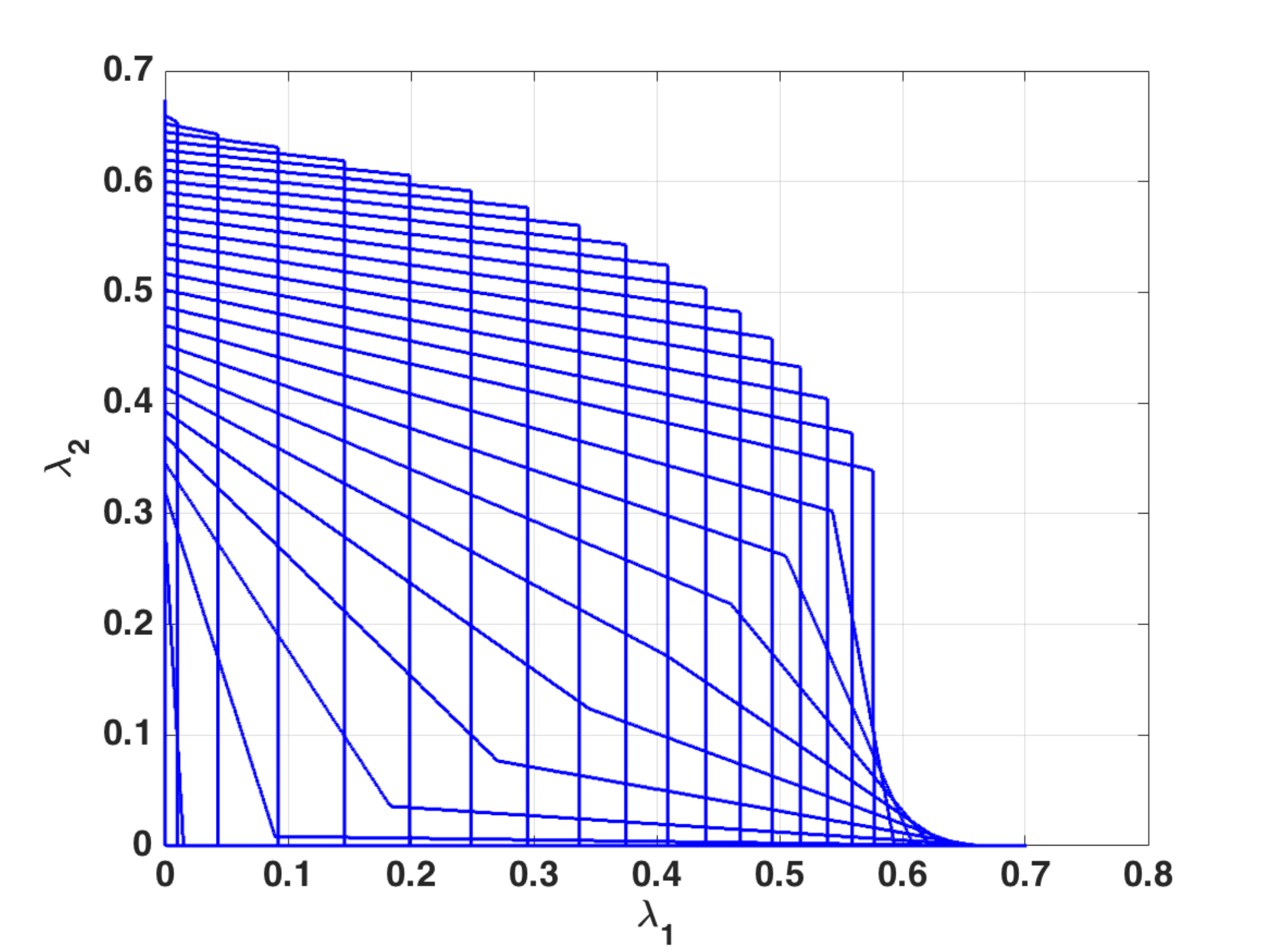}}
  \subfloat[SD-PC]{%
    \includegraphics[scale=0.45]{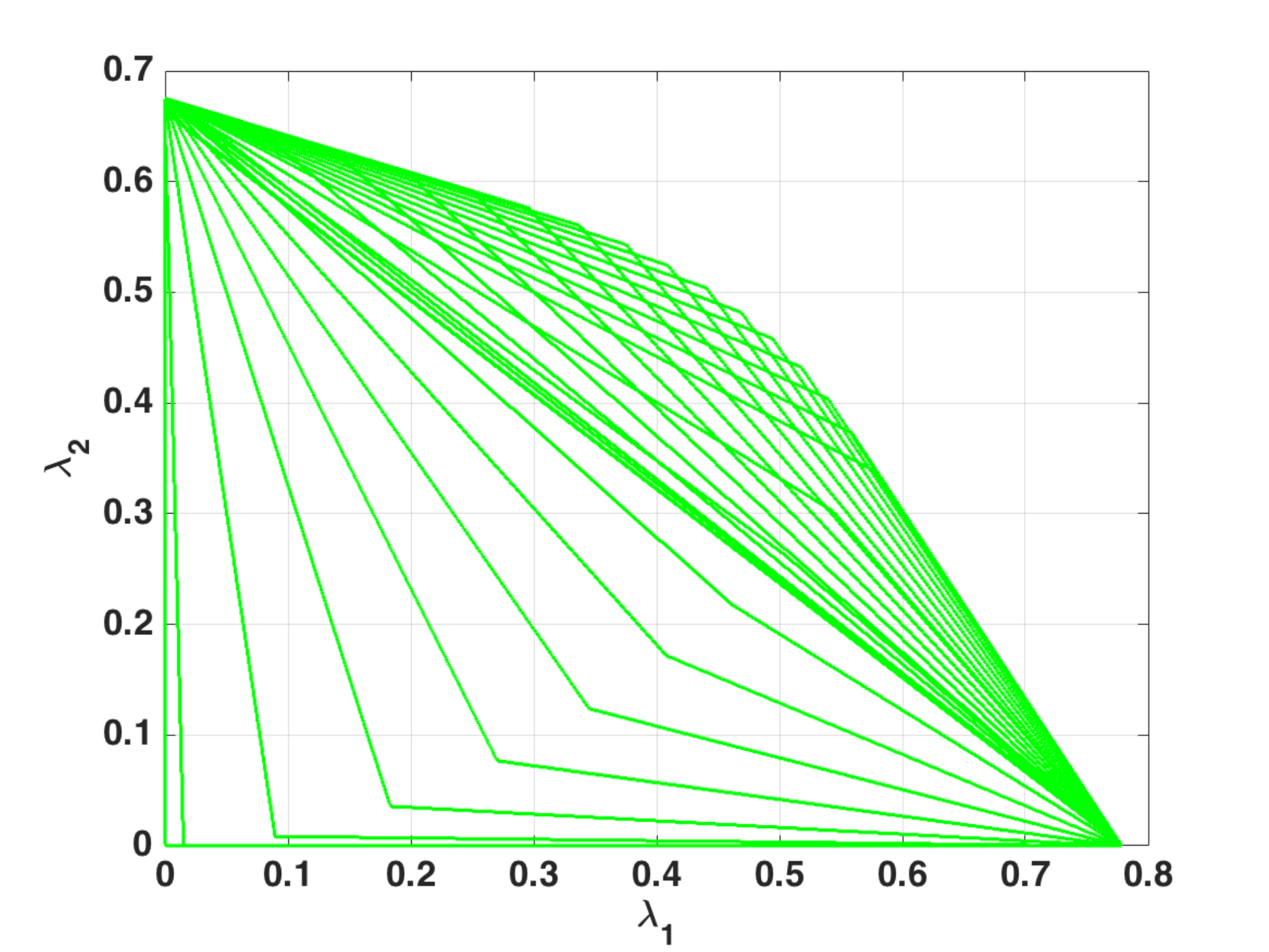}}\\
  \caption{The closure of the stability region for the case where $\gamma_1=0.5$ and $\gamma_2=0.4$.}\label{fig:Closure_Case1}
\end{figure}

The closure of the stability region for $\gamma_1=1.2$, $\gamma_2=0.7$ is depicted in Fig. \ref{fig:Closure_Case2}. Since $\gamma_1=1.2$, interference is more destructive for the first user. This results in TIN-PC and SD-PC having similar performance, and close to that of time-sharing. The performance of TIN and SD is worse than the respective schemes with power control; and the closures of the stability regions are non-convex.

\begin{figure}[htb]
\centering
  \subfloat[TIN]{%
    \includegraphics[scale=0.45]{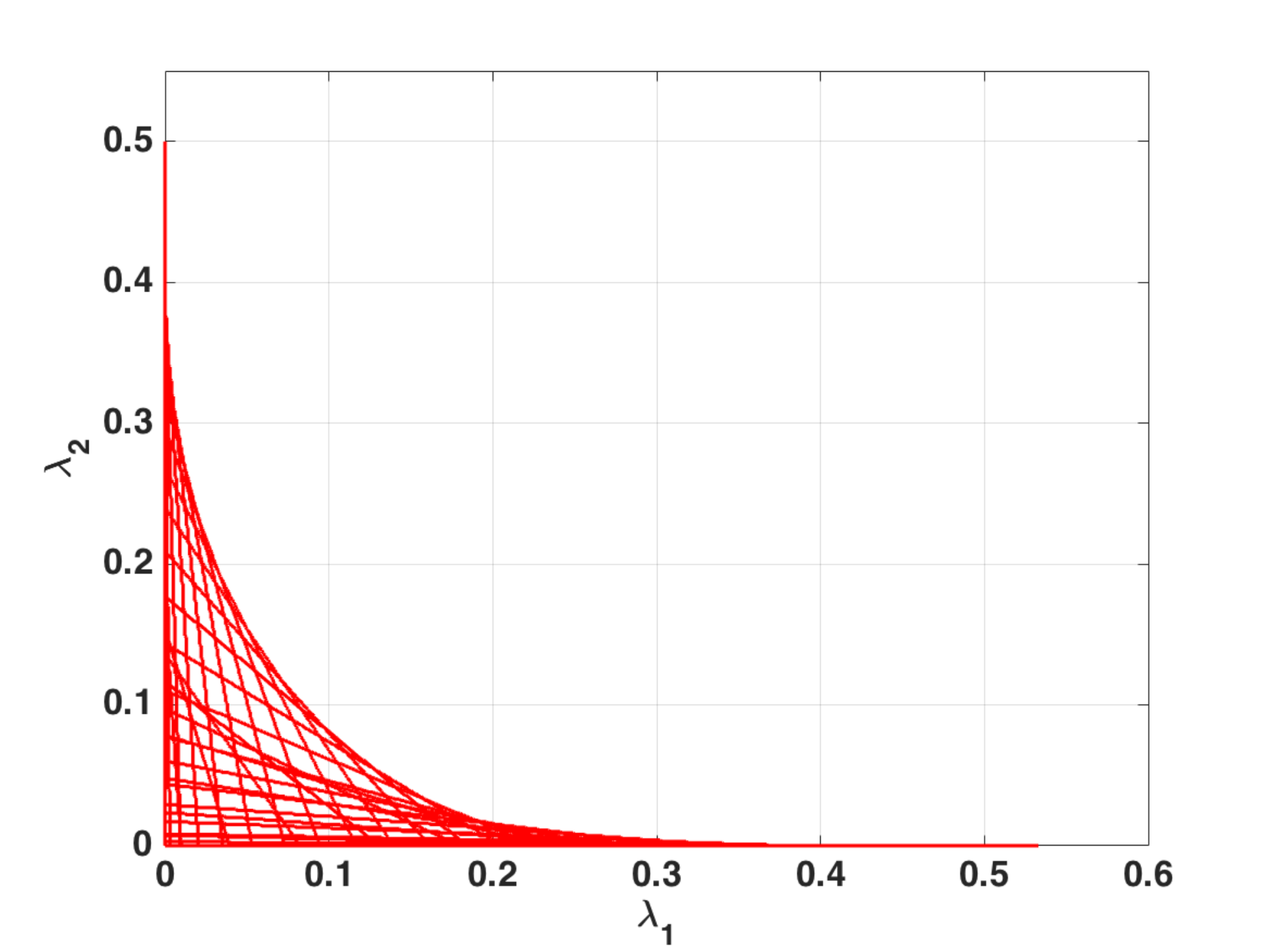}}
  \subfloat[TIN-PC]{%
    \includegraphics[scale=0.45]{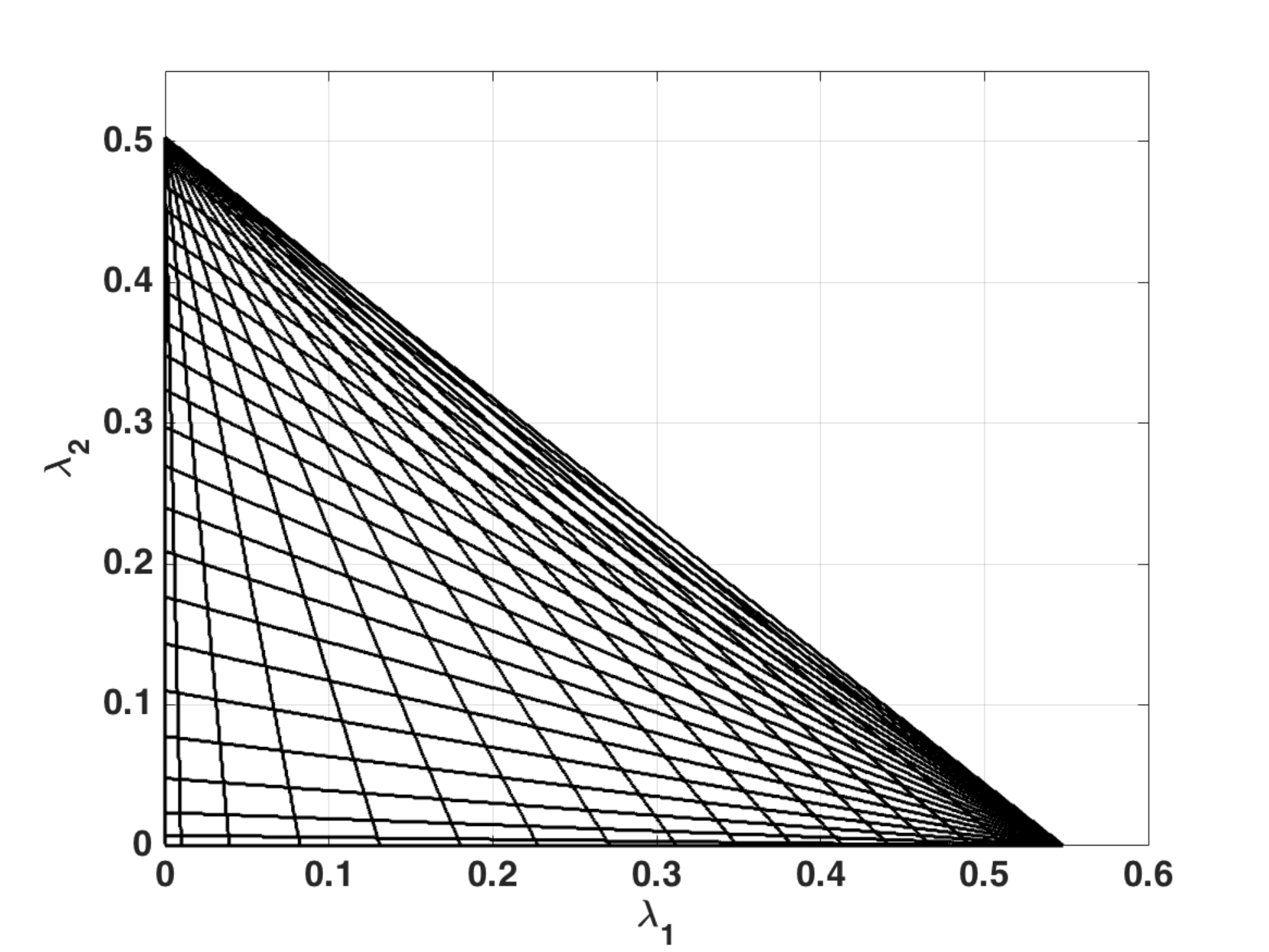}} \\
  \subfloat[SD]{%
    \includegraphics[scale=0.45]{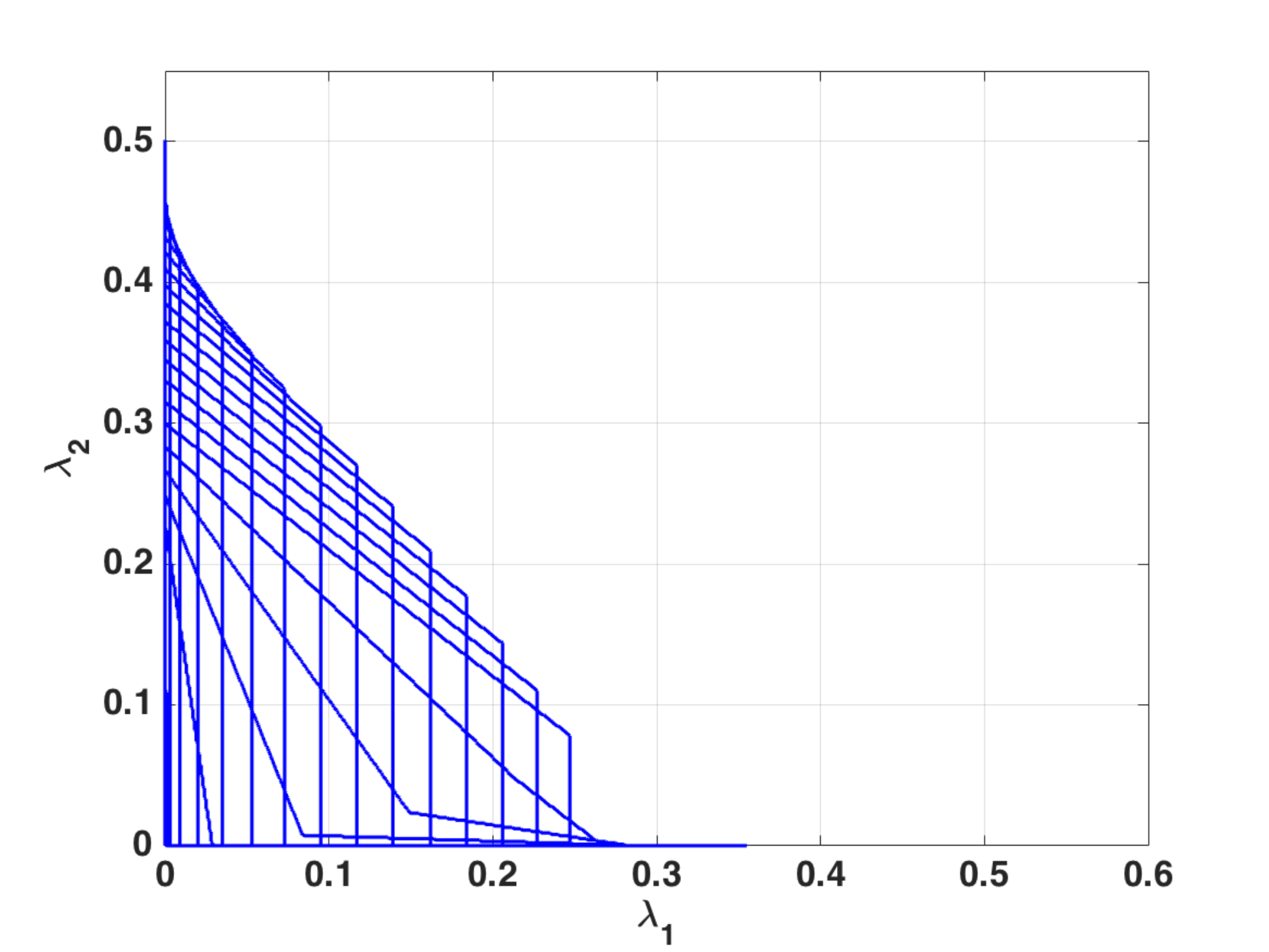}}
  \subfloat[SD-PC]{%
    \includegraphics[scale=0.45]{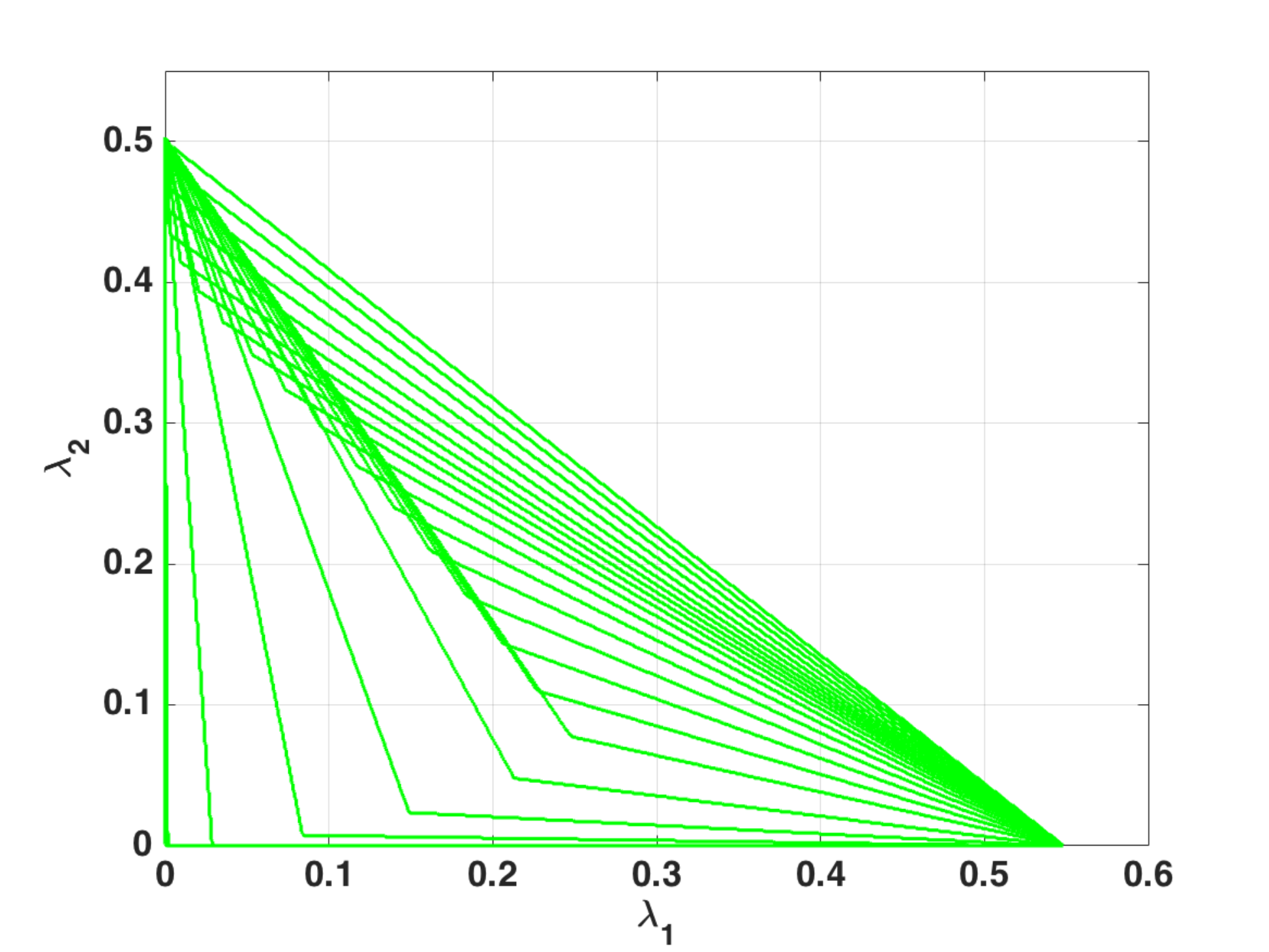}}\\
  \caption{The closure of the stability region for the case where $\gamma_1=1.2$ and $\gamma_2=0.7$.}\label{fig:Closure_Case2}
\end{figure}

The main message from this set of numerical evaluations is that power control has a significant impact on the closure of the stable throughput region, even in the case where both receivers treat interference as noise. 

\subsection{Maximum Aggregate Stable Throughput}
We numerically evaluate here the maximum aggregate stable throughput as discussed in Section \ref{sec:Aggregate}. We also compare the maximum aggregate stable throughput with the maximum aggregate saturated throughput where the queues of the source are always backlogged. In the latter, the aggregate saturated throughput is $T_\mathrm{aggr,SAT}= \mathrm{Pr}\left(\mathcal{D}_{1/1,2} \right)+\mathrm{Pr}\left(\mathcal{D}_{2/1,2} \right)$. Regarding the maximum aggregate stable throughput, we only consider the cases with power control as they are more interesting given their superior performance.

The case for $\gamma_1=0.5$, $\gamma_2=0.4$ is depicted in Fig. \ref{fig:Aggr_Case1}.
We observe that in the low or in the high power regime, the achieved maximum aggregate stable throughput is $\mathrm{Pr}\left(\mathcal{D}_{1/1} \right)$ for both SD-PC and TIN-PC, as it can also be observed in Tables \ref{tab:succ_probs} and \ref{tab:succ_probs_interference}. However, in the intermediate power regime, the maximum is $\mathrm{Pr}\left(\mathcal{D}_{1/1,2} \right) + \mathrm{Pr}\left(\mathcal{D}_{2/1,2} \right)$ and it is the same with SD-SAT. This maximum is achieved when both the queues are active meaning that the system can sustain parallel transmissions without performance degradation. 

\begin{figure}[]
\centering
\includegraphics[scale=0.75]{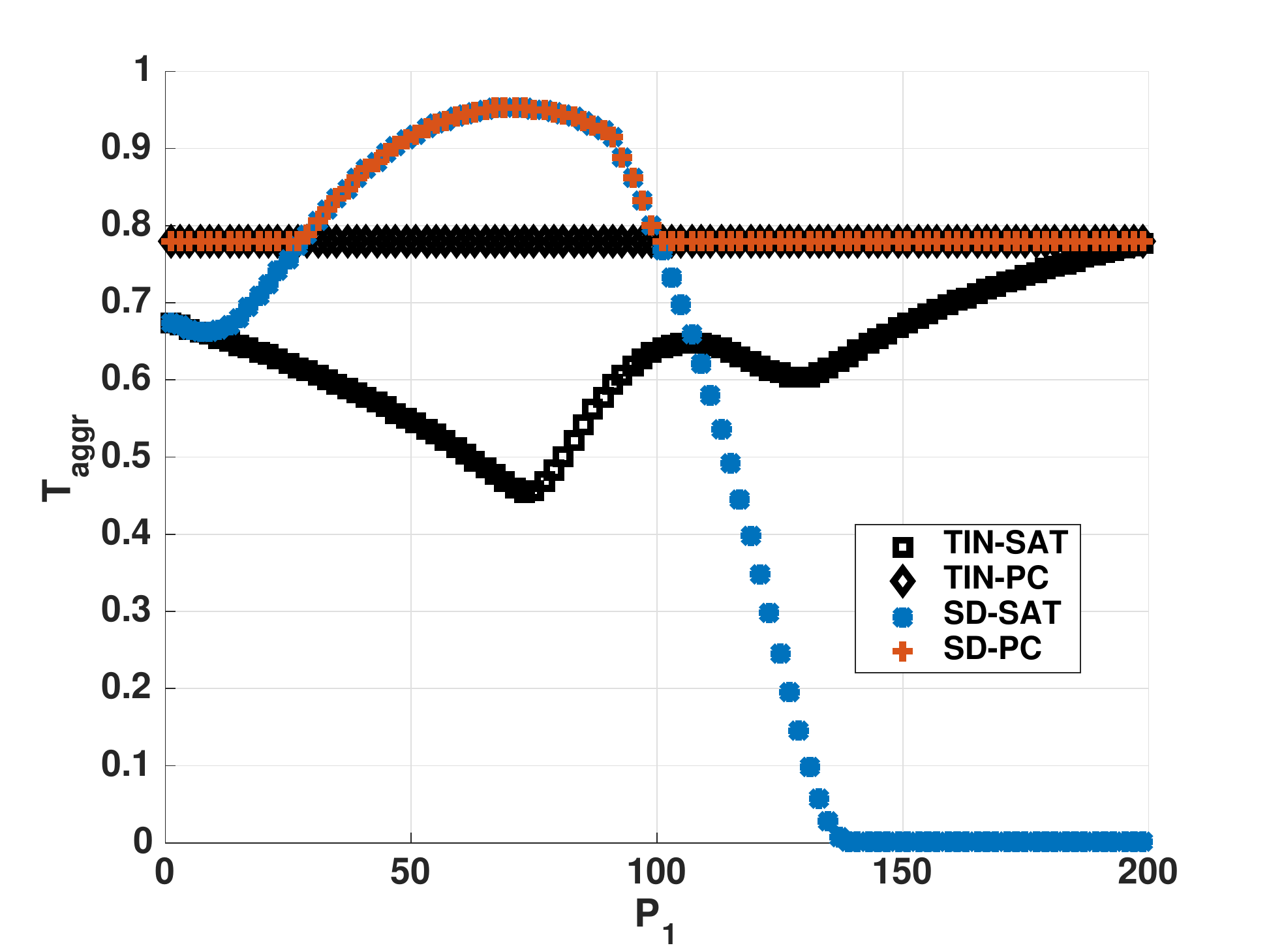}
\caption{The maximum aggregate stable throughput vs $P_1$ for $\gamma_1=0.5$ and $\gamma_2=0.4$. The maximum aggregate saturated throughput is depicted and denoted by SAT for each scheme.}
\centering
\label{fig:Aggr_Case1}
\end{figure}

The case for $\gamma_1=1.2$, $\gamma_2=0.7$ is depicted in Fig. \ref{fig:Aggr_Case2}. Since $\gamma_1 > 1$, the achieved maximum aggregate stable throughput is $\mathrm{Pr}\left(\mathcal{D}_{1/1} \right)$ for both SD-PC and TIN-PC (see also Tables \ref{tab:succ_probs} and \ref{tab:succ_probs_interference}). The maximum aggregate saturated throughput is always less than the stable one.
This case implies that if we want to maximize the aggregate stable throughput, it is better to transmit on orthogonal channels for each user, rather than transmitting concurrently.

\begin{figure}[]
\centering
\includegraphics[scale=0.75]{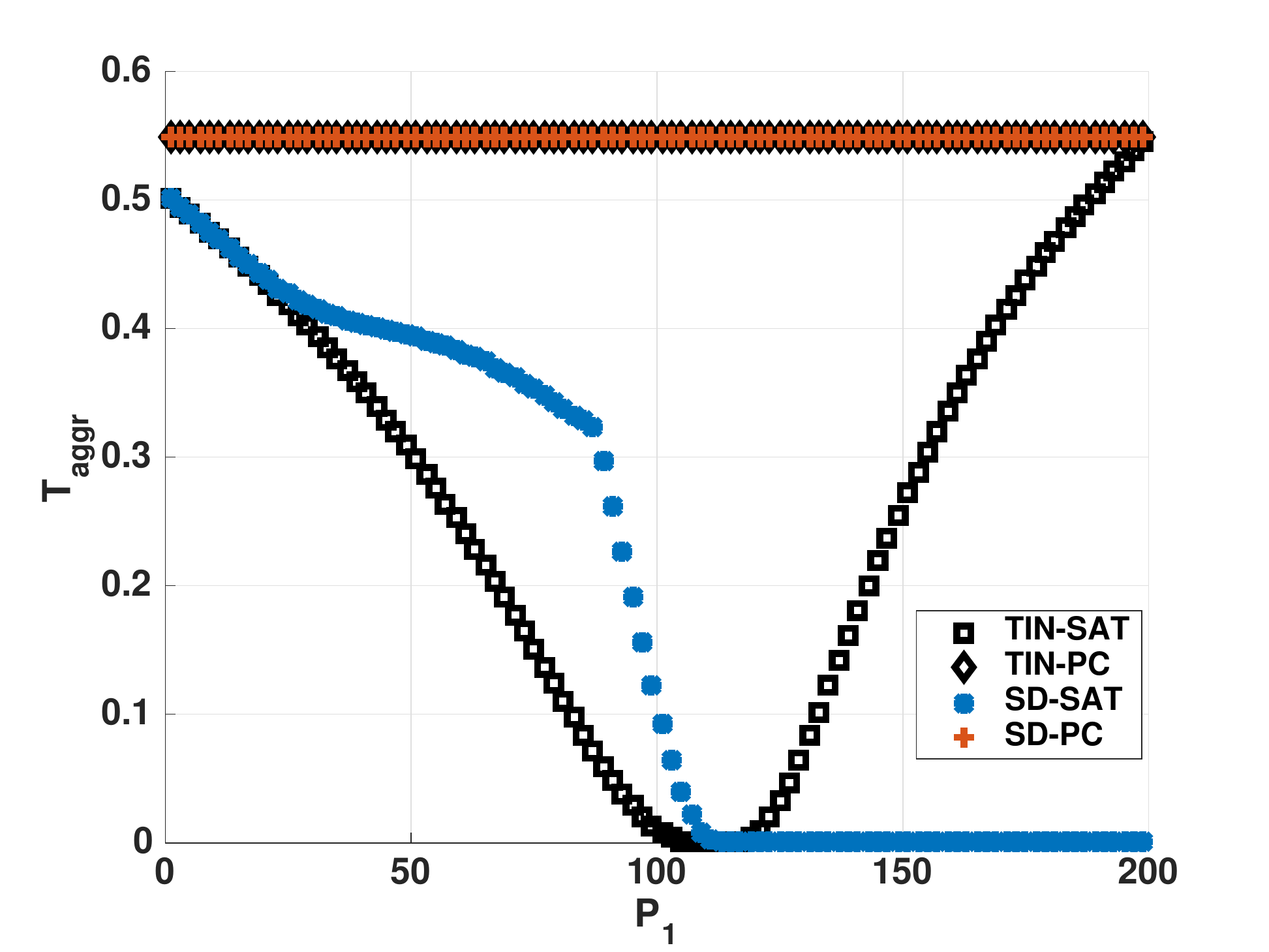}
\caption{The maximum aggregate stable throughput vs $P_1$ for $\gamma_1=1.2$ and $\gamma_2=0.7$. The maximum aggregate saturated throughput is depicted and denoted by SAT for each scheme.}
\centering
\label{fig:Aggr_Case2}
\end{figure}

\section{Interpretation of the Results}

In the previous section we presented numerical results of the stability region for various values of the parameters.

It is of interest to consider the relation between the different relevant rate measures such as the information-theoretic capacity region, the stable throughout region and the saturated throughput. These rates measures are derived under different assumptions and for subtly different models. More specifically, the capacity region is derived under the assumption that the users are backlogged and their queues are assumed saturated, which means that they are never empty and furthermore under elaborate coding procedures. The saturated throughput is expressed in packets per slot that are achievable in a network. The users are considered saturated here as well. The relationship among these regions is quite complex and not fully understood.

\begin{figure}[]
\centering
\includegraphics[scale=0.6]{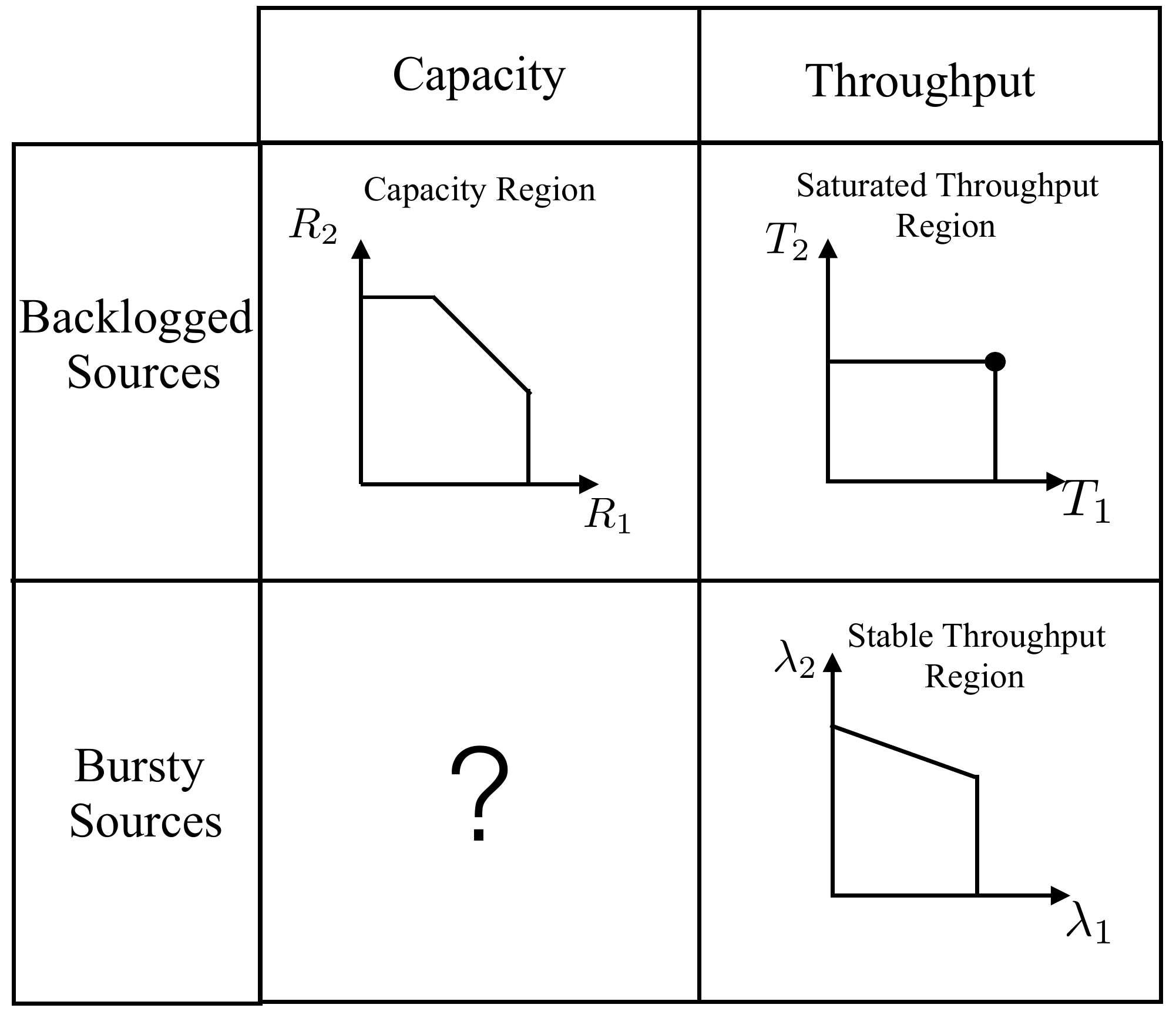}
\caption{The different regions for the two-user broadcast channel.}
\centering
\label{fig:regions}
\end{figure}

In Fig. \ref{fig:regions} we illustrate with a schematic the four possible rate regions that can be considered for every multi-user system.

There is a fourth metric which is the equivalent to the capacity region but for bursty sources. In general, we know very little about this region and it is of interest to consider the case of capacity under random arrivals at the sources. As seen in \cite{b:VerduTIT96}, a source with random arrivals can achieve higher rates than the saturated case. 

As stable throughput region can be an outer bound of the saturated throughput, similarly the stable capacity region can outer bound the Shannon capacity region. The first reason is that the resource is not constantly contested under bursty conditions. The second reason is that the timing information rate contributes to the increase under bursty conditions.

Understanding the relationship between the information-theoretic capacity region and the stability region has received considerable attention in recent years and some progress has been made primarily for multiple access channels. Interestingly, the aforementioned regions (capacity and stability) are not in general identical and general conditions under which they coincide are known only in very few cases \cite{b:EphremidesHajekUnion}.

\section{Summary} \label{sec:conclusions}
In this work, we derived the stability region for the two-user broadcast channel. We considered two decoding schemes at the receiver side, namely treating interference as noise by both receivers and successive decoding by the strong receiver. For the latter, two simple power allocation policies were studied, a fixed power allocation and an adaptive power scheme based on the queues' states. Furthermore, we obtained conditions on the convexity for the stability region in order to identify regimes with superior performance comparing with time-sharing.

\bibliographystyle{IEEEtran}
\bibliography{thesis}

\end{document}